\newtheorem{theorem}{Theorem}[section]
\def\disp{\displaystyle}
\def\be{\begin{equation}}
\def\ee{\end{equation}}
\def\beq{ \begin{eqnarray}}
\def\eeq{ \end{eqnarray}}
\def\bes{\begin{equation*}}
\def\ees{\end{equation*}}
\def\const {{\textrm const}}
\def\dd {{\textrm d}}
\def\d{{\textrm d}}
\def\RR{{\mathbb R}}
\title{Retirement spending problem under Habit Formation Model}
\author[1]{S. Kirusheva\footnote{skir@yorku.ca. Kirusheva's research was supported in part by MITACS and CANNEX Inc.}}
\author[2]{H. Huang}
\author[3]{T.S. Salisbury\footnote{salt@yorku.ca.  Salisbury's and Huang's research were suppoted in part by grants from NSERC.}}
\affil[1,2,3]{Department of Mathematics and Statistics, York University, Toronto, Ontario, Canada}
\affil[2]{Research Centre for Mathematics, Advanced Institute of Natural Sciences, Beijing Normal University, Zhuhai, Guangdong, China}
\affil[2]{BNU-HKBU United International College, Zhuhai, Guangdong, China}
\date{}
\begin{document}
\maketitle

\pagenumbering{arabic}
\pagestyle{myheadings}
\vspace{-1cm}
\begin{center}
Abstract
\end{center}
\vspace{-0.5cm}
\begin{abstract} 
In this paper we consider the problem of optimizing lifetime consumption under a habit formation model. Our work differs from previous results, because we incorporate mortality and pension income. Lifetime utility of consumption makes the problem time inhomogeneous, because of the effect of ageing. Considering habit formation means increasing the dimension of the stochastic control problem, because one must track smoothed-consumption using an additional variable, habit $\bar c$. Including exogenous pension income $\pi$ means that we cannot rely on a kind of scaling transformation to reduce the dimension of the problem as in earlier work, therefore we solve it numerically, using a finite difference scheme. We also explore how consumption changes over time based on habit if the retiree follows the optimal strategy. Finally,  we answer the question of whether it is reasonable to annuitize wealth at the time of retirement or not by varying parameters, such as asset allocation $\theta$ and the smoothing factor $\eta$.
\end{abstract}
\vspace{-0.8cm}
\section{Introduction}
\subsection{Motivation}
Nowadays, we observe a growing interest in investment plans that give a potential client the confidence of a stable income over the course of retirement.  The main goal of any such plan is to find the strategy that minimizes the risk of ruin and, at the same time, maximizes the level of consumption.  Our current research deals with a retirement spending problem (RSP) under dynamics that include the individual's  habit. In other words, we take into consideration how much a retiree usually spends, i.e. we solve the problem under a habit formation model (HFM). This postulate makes the model much more difficult to solve. 

In this paper our goal is to explore how the presence of exogenous income in the model that includes the client's habit will affect the optimal consumption and compare these results with different models, such as HFM without pension for two cases, with and without asset allocation. Also we answer the question of how much the agent can spend during retirement based on his initial  wealth $w$ and consumption habit $\bar c.$ 

There is one more interesting option for a retiree that we also discuss in this article. The individual can convert some or all of his initial wealth into annuities. We briefly discuss this possibility, and if it is reasonble to do so at age of $65$  depending on the client's habit. We assume that once he converts his wealth into annuities he can't reverse the transaction. We obtain numerical results for different parameters, such as habit $\bar c,$ asset allocation $\theta,$ and smoothing factor $\eta$.
  
\subsection{Literature review}
Many articles have been written on this topic that consider various scenarios. Lately more researchers are paying more attention to the HFM  when they deal with financial questions.
There are several articles that solve similar problems, somehow related to the habit formation model, for example \cite{B}, \cite{CA}, \cite{CH}, \cite{N}, \cite{P}, \cite{Pollak}. All of them use  different approaches and techniques. First,  we should mention one of the earliest papers \cite{C}. In that article the author tried to solve the equity premium puzzle (EPP) which was first formalized in a study by Rajnish Mehra and Edward C. Prescott \cite{MP}. Under the assumption of rational expectations this problem was resolved. One of the issues with that formulation is that the consumption should be always greater or equal to the exponentially weighted average (EWA) of consumption which is hard to implement in real situations. As a consequence, there are modifications of this work which are discussed, for instance in the book \cite{R} which introduces a novel form for the HFM utility. Another attempt to resolve the EPP was made by the authors \cite{X}. They considered optimal portfolio and consumption selection problems  with habit formation in a jump diffusion incomplete market in continuous-time. One more pioneering work \cite{Pollak} describes a model of consumer behaviour based on a specific class of utility functions, the so-called ``modified Bergson family''. 

Another example of using HFM was covered in the following article \cite{P} which explores the implications of additive and endogenous habit formation preferences in the context of a life-cycle model of consumption and portfolio choice for an investor who has stochastic uninsurable labor income. In order to get a solution he derives analytically constraints for habit and wealth and explains the relationship between  the worst possible path of future labor income and the habit strength parameter. He concludes that  even a small possibility of a very low income implies more conservative portfolios and higher savings rates. The main implications of the model are robust to income smoothing through borrowing or flexible labor supply.

Finally, we would like to mention one more paper \cite{Jer} where the authors proved existence of optimal consumption-portfolio policies for utility functions for which the marginal cost of consumption (MCC) interacted with the habit formation process and satisfied a recursive integral equation with a forward functional Lipschitz integrand and for  utilities for which the MCC is independent of the standard of living and satisfied a recursive integral equation with locally Lipschitz integrand.

There are a lot of financial strategies which can help to plan how to spend money under different preferences but one of the most important targets is to arrange consumption during retirement. There are many works devoted to retirement spending plans, such as \cite{B}, \cite{H},\cite{HH}, \cite{MHH}. For example, in the paper \cite{B} the authors discuss consumption and investment decisions in a life-cycle model under a habit formation model incorporating stochastic wages and labor supply flexibility. One of the results shown was that utilities that exhibit habit formation and consumption-leisure complimentarities induce an impact of past wages on the consumption of retirees. Hence the authors  showed that it is important to take into consideration  habit and consumption-leisure complimentarities when formulating life-cycle investment plans. In the next article \cite{H} the authors consider a model based on results from the article \cite{MHH} where a similar problem was solved under assumption of deterministic investment returns. In \cite{H} the authors accept stochastic returns and then compare optimal spending rates with the analytic approach from the article \cite{MHH}.  When a potential client starts to think about a retirement spending plan there is one more question that arises, namely under which conditions he can consider investment into annuities for part or all of his wealth. To be precise, when we say ``annuities'' we mean life annuities,  insurance products that pay out a periodic amount for as long as the annuitant is alive, in exchange for a premium (see \cite{Br}). This question has been widely discussed in the literature, for example \cite{Mil}, \cite{MH} or \cite{Rei}.

 In the recent article \cite{H} RSP was solved for fixed risky asset allocation $\theta=\const.$ Here we solve a similar problem following HFM, using the novel utility of \cite{R}. 
\subsection{Paper Agenda}
The paper is organized as follows. In Section \ref{subsection_1} we explain what the habit formation model is and formulate our problem for two different cases,  without  (see Section \ref{subsection_1cs}) and with (see Section \ref{subsection_3})  pension. In Section \ref{numerical_results} we discuss how the smoothing factor $\eta$ affects the numerical solution and  provide a comparison between two different cases, without pension (Section \ref{wo_pension}) and  when the client has constant pension income (Section \ref{comparison1} and \ref{comparison2}).

Section \ref{wealth_sim} is devoted to discussing how the client can spend money during his retirement based on a given initial amount of wealth $w$ and a certain habit $\bar c.$
In the Section \ref{wealth} we analyze the possibility of annuitizing wealth, entire or partially, at the age of $65.$ 

As with most of these models, this one doesn't have an analytical solution and has to be approximated numerically. Many algorithms have been developed through the years. Every algorithm has its own advantages and disadvantages, which differ in accuracy and efficiency.  In this paper we chose a finite difference scheme for its simplicity and accuracy. The detailed description of the approximation  scheme, some error analysis as well as some theoretical background are provided in the Appendices (see  \ref{appendix_theorem}, \ref{appendix_FDS}, \ref{error}). In Appendix  \ref{numerical_res} we summarize numerical results obtained for different sets of wealth $w$ and habit $\bar c$. Finally, in Appendix \ref{section_comparison}, we describe some numerical results for different sets of asset allocations $\theta$ and volatility $\sigma$ for fixed smoothing factor $\eta=1.0,$ as the most interesting case for solving problem under HFM.

\section{Model formulation}
\label{subsection_1}
\subsection{Introduction}
\label{theor_backgr}
When we think about the model we should think about a client, more precisely about a retiree, who has a certain amount of wealth $w$ and who wants to know what to do next with his endowment. In order to decide how much he can spend we should understand how wealth is changing over the time counting all possible income and expenses, as in the following:
\beq
\begin{array}{c}
\dd w_t=[\theta(\mu-r)+r]w_t \dd t+\theta \sigma w_t  \dd W_t+\pi \dd t-c_t \dd t \\
\dd\bar c_t=\eta (c_t-\bar c_t)\dd t.
\end{array}
\label{TB0}
\eeq
We can provide the following explanation of the equations (\ref{TB0}). There is a part of wealth $w$ which grows at the riskless rate $r$, there is a stochastic component  represented by the parameter $\theta$,  which is the fraction of wealth invested into risky assets (in our case, we take it as a fixed parameter $\theta$), drift  $\mu$, volatility $\sigma$  and $W_t$ a Brownian Motion (BM).  Also assume that there is an exogenous fixed income, pension $\pi.$ We solve our problem using a habit formation model. This means that the agent's  utility depends on the consumption rate  $c_t$ and on an EWA $\bar c_t$ of consumption rates over previous time periods. We will consider the finite-horizon problem therefore the client's objective function is taken to be
\beq
V(t,w,\bar c)=\sup_{c_t} E\left[\int_t^T e^{-\rho s}\tensor[_s]{p}{_x}u\left(\frac{c_s}{\bar c_s}\right)\d s|w_t=w,\bar c_t=\bar c\right]
\label{TB5}
\eeq
where $\rho$ is the personal time preference or subjective discount rate,  $\tensor[_s]{p}{_x}$ is the probability of survival from the retirement age $x$ to $x+s.$
We set up  the probability of survival based on the Gompertz Law of Mortality (\cite{M}), i.e.
\be
\tensor[_s]{p}{_x}=e^{-\int_0^t \lambda_{x+q}\d q}.
\label{3}
\ee
Here $\lambda$ is the biological hazard rate
$
\lambda_{x+q}=\frac{1}{b}e^{(x+q-m)/b}
$
where $m$ is the modal value of life (see p47, \cite{M}), $b$ is the dispersion coefficient of the future lifetime random variable,
$u$ is the CRRA  utility function
\be
u(c)=\frac{c^{1-\gamma}}{1-\gamma}
\label{4}
\ee
where $\gamma$ is the risk aversion parameter. Note that the formulation of utility in (\ref{TB5}) is due to  \cite{R} and differs from much of the HFM literature. This  changes the nature of the solutions.

In order to solve this problem, we will use the value function approach that implies we should derive an Hamilton-Jacobi-Bellman (HJB) equation for our model. 
Let us consider two different  cases. The problem that we are going to solve first, reflects an agent's expectations who doesn't have any pension income which means that his wealth's growth results partly from investing in a bank account growing at the risk free rate and partly from investing  another portion of the wealth into risky assets. The second problem involves the presence of pension income. In addition, the asset allocation will be fixed.
\subsection{Model without pension.}
\label{subsection_1cs} 
In this paragraph we consider wealth and habit dynamics (\ref{TB0}) without pension and with asset allocation as a parameter $\theta=const$ with the client's objective function described as follows (see Appendix \ref{appendix_theorem}(\ref{verification_})). Since we use the value function approach,  we need to derive an HJB equation using a verification theorem (\ref{verification_theorem}) but, first, let's change variables. This will allow us to reduce the dimension of our problem by analogy with one introduced in the book \cite{R}
\beq
\begin{array}{c}
\vspace{0.2cm}
x_t\equiv \disp\frac{w_t}{\bar c_t},\quad q_t=\disp\frac{c_t}{\bar c_t},  \\
V(t,w,\bar c)=V\left(t,\disp\frac{w}{\bar c},1\right)\equiv \nu\left(t,\disp\frac{w}{\bar c}\right)=\nu(t,x)
\end{array}
\label{eq4}
\eeq
where $V(t,w,\bar c)$ is the unknown value function defined by equation (\ref{TB5}).
Then the dynamics of the new scaled wealth $x$ will be the following
\beq
\dd x_t= \dd\left(\frac{w_t}{\bar c_t}\right)=rx_t\dd t+\theta((\mu-r)x_t\dd t+\sigma x_t \dd W_t)-(\eta x_t+1)q_t\dd t+\eta x_t \dd t. \nonumber 
\eeq
Omitting all calculations  we provide only the final HJB equation using the new variables
\be
\nu_t-(\rho+\lambda) \nu+\alpha\nu_x + \beta\nu_{xx}+u(q^*)=0
\label{HJB_2}
\ee
where the optimal consumption $q$ will be 
\beq
q^*=[(\eta x+1)\nu_x]^{-\frac{1}{\gamma}}.
\eeq
and the coefficients $\alpha=\alpha(x)$ and $\beta=\beta(x)$ in the formula have the following  form 
\newline $\alpha= (\theta(\mu-r)+r+\eta)x-(\eta x+1)q^*, \quad \beta=\frac{1}{2}\theta^{2}\sigma^2x^2. $

Boundary conditions for this problem are taken to be following: At the boundary $x=0$ assume that optimal consumption is $q^*_t=0$ since there is not any other income.
Then at the other boundary where $x=x_{\max}$ assume that the value function gradually approaches zero which implies that the value function derivative is zero, i.e. $\nu_x=0.$

\subsection{Model with pension.}
\label{subsection_3}
In this paragraph we expand on the previous problem and now, assume that the client not only invests part of his wealth into a bank account and makes profits from stocks, but also has  pension income. 

Now, suppose that the wealth dynamics and objective function can be described by equations (\ref{TB0})- (\ref{TB5}).   The goal is to maximize the value function by controlling the consumption $c_t$ such that the wealth $w_t$ remains non-negative with fixed asset allocation. In Appendix \ref{appendix_theorem} \ref{subsection_1} we derive and prove a verification theorem for the following HJB equation
\beq
\hspace{-1cm}\sup_{c_t}\Bigl[\tilde V_t+\tilde V_w(\theta(\mu-r)+r)w_t+\tilde V_w(\pi-c_t)+ \tilde V_{\bar c}\eta (c_t-\bar c_t)+ \Bigr. \nonumber \\
\Bigl.\frac{1}{2}\tilde V_{ww}\theta^2\sigma^2w^2_t+e^{-\rho t}\tensor[_t]{p}{_x} u\left(\frac{c_t}{\bar c_t}\right)\Bigr]=0. 
\label{HJB}
\eeq
where the consumption rate $c_t$ is the only our control variable.  Now, we introduce new notation for our convenience
\be
\tilde V(t,w_t,\bar c_t)=e^{-\rho t}\tensor[_t]{p}{_x} V(t,w_t,\bar c_t).
\ee

By performing standard calculations we can obtain the HJB  equation
\beq
 V_t-(\rho+\lambda_{t+x})V+ V_w( (\theta(\mu-r)+r)w+\pi-c^*)+\nonumber \\
  V_{\bar c}\eta (c^*-\bar c)+  \frac{1}{2}V_{ww}\theta^2\sigma^2w^2+u\left(\frac{c^*}{\bar c}\right)=0.
\label{PDE_c_opt}
\eeq
where the optimal consumption has the form
\be
c^*=\bar c^{\frac{\gamma-1}{\gamma}} (V_w-V_{\bar c}\eta)^{-\frac{1}{\gamma}}.
\label{opt_c}
\ee

Since we deal with a nonlinear second order PDE (\ref{PDE_c_opt}) we should set up boundary conditions.
At the terminal time $T$ the integral in the formula that represents the value function $V(t,w,\bar c)$ is zero. 
At the boundary with zero wealth, $w=0,$ we impose the constraint $c^*<\pi.$ Then we will get a simple first-order PDE 
\beq
 V_t-(\rho+\lambda_{t+x})V+  V_w( \pi-c^*)+V_{\bar c}\eta (c^*-\bar c) + u\left(\frac{c^*}{\bar c}\right)=0
 \label{BC_1}
\eeq

If we assume that the value function is asymptotically proportional to wealth
\be
V(t,w,\bar c) \sim  f(t,\bar c)w^{1-\gamma} \quad {\textrm then} \quad V_w \sim f(t,\bar c)w^{-\gamma}
\ee
where $f(t,\bar c)$ is some arbitrary function of time and the EWA of consumption. When the wealth is big enough, i.e. $w\to \infty$ the derivative of the value function goes to zero asymptotically
\be
V_w=0.
\label{BC_2}
\ee
This means that changes in wealth $w$ are not that important and do not affect the utility function as much as in the case when $w\to 0.$

The detailed discussion of the approximation scheme and discretization error is provided in the Appendices (see \ref{appendix_FDS} and \ref{error}).
\section{Smoothing factor effect on numerical solution}
\label{numerical_results}
\subsection{HFM w\textbackslash o pension.}
\label{wo_pension}
We start our discussion by presenting results where the retiree does not have any pension income. Since the problem does not have an analytical solution we have to solve it numerically. For that we choose an implicit upwind method described in detail in Appendix  \ref{appendix_FDS}. 
Below we provide some numerical results for two cases. In the first picture (see Figure (\ref{test1}))  we see the numerical results for fixed asset allocation $\theta=0.6,$ whereas in the second picture (see Figure \ref{test2}) we solved the optimization problem (OP) where asset allocation is a control variable $\theta_t$ (this is close to what is done in \cite{R}).  In order to solve this problem we also set some parameters, namely:
risk-free rate $r=0.02,$ drift $\mu=0.08,$ risk aversion parameter $\gamma=3$ and volatility $\sigma=0.16.$ Figure (\ref{test}) shows the solution with these parameters, at age $65.$
\begin{figure}[H]\centering
\begin{subfigure}[c]{0.4\textwidth}
\includegraphics[scale=0.3]{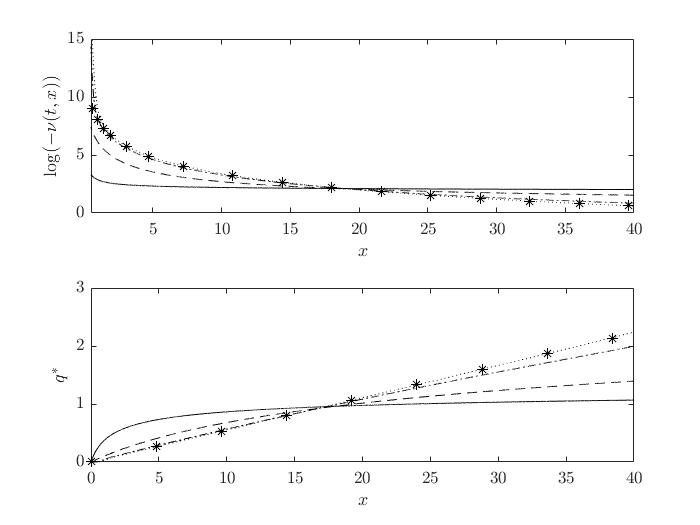}
\subcaption{$\theta=0.6$ is fixed.}
\label{test1}
\end{subfigure}
\hspace{1cm}
\begin{subfigure}[c]{0.4\textwidth}
\includegraphics[scale=0.3]{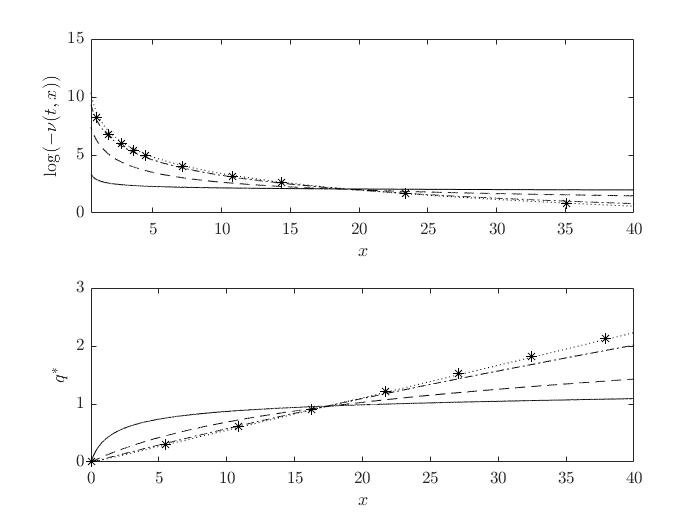}
\subcaption{$\theta$ is a control variable.}
\label{test2}
\end{subfigure}
\caption{Solution for OP w\textbackslash o pension for asset allocation $\theta$ in \ref{test1} as a fixed parameter $\theta=0.6$ and as a control variable in \ref{test2} where the solid line represents $\eta=1.0,$
the dashed line - $\eta=0.1,$ the dash-dot line - $\eta=0.01$ and the dotted line - $\eta=0$ and $*-$ line represents an alternative numerical solution for the last case at the present time.\vspace{1cm}}
\label{test}
\end{figure}
In Figure (\ref{test1}) the upper picture shows the relationship between the logarithm of the negative value function ($\log(-\nu(t,x))$) and the logarithm of the ratio of wealth to habit ($\log(w/\bar c)$) for different values of the smoothing factor $\eta=0,\;0.1\;,0.01,\; 1.0.$ The solid line represents $\eta=1.0,$ the dashed line - $\eta=0.1$ and the dash-dot line represents $\eta=0.01.$ The last case corresponds to the Merton problem $\eta=0$ (dotted line). The star line represents another numerical solution for the later case with the assumption that value function has the form  $\nu(t,x)=h(t)u(x) $  where the function $h(t)$ satisfies the following ODE
\begin{equation}
h'+(-\rho +(1-\gamma)(r+\theta(\mu-r)-\frac{\gamma}{2}\theta^2\sigma^2))h+\gamma h^{\frac{\gamma-1}{\gamma}}=0.
\label{bernulli}
\end{equation}
Then the solution will have the following form and also can be solved numerically
\beq
f_1=-\rho+(1-\gamma)\left(\theta(\mu-r)+r-\frac{\gamma}{2}(\theta\sigma)^2\right),  \qquad 
f_2=e^\frac{x-m+T}{b}, \nonumber \\
h(t)=\left(e^{{\frac{1}{\gamma}\left(f_1(T-t)-f_2+e^{(x-m+t)/b}\right)}}\left(1+\int_t^Te^{-\frac{1}{\gamma}\left(f_1(T-t^\prime)-f_2+e^{(x-m+t^\prime)/b}\right)}\dd t^\prime\right)\right)^\gamma.
\eeq
If we look at the upper pictures (see Figure \ref{test1} or \ref{test2})  they show that in the beginning the value function decreases dramatically and then gradually slows down and settles into asymptotic behaviour.  This reflects the intuition about what the value function is. When wealth is small then its changes have a great impact on the value function. As soon as changes in wealth become significantly smaller than an absolute value of the wealth the value function slows down.
The bottom picture in Figure (\ref{test1}) represents the relationship between the optimal ratios of consumption-habit ($c/\bar c$) and wealth-habit ($w/\bar c$) for four values of the parameter $\eta=0,\; 0.01,\; 0.1,\; 1.0$. As we can see in the case with $\eta=1.0$ (solid line), the possibility for consumption $c$ to exceed habit exists only in case of very  big wealth,  for other values the rule is following: the less responsive the consumption is to changes in habit, the higher the optimal consumption will be. 

Figure \ref{test2}  represents two graphs where $\theta_t$ is a control variable. All calculations can be done analogously with the only difference being that now, we need  find an optimal value for asset allocation as well. It can be computed as follows
\beq
\theta^*=-\frac{\kappa}{\sigma}\frac{\nu^\prime}{\nu^{\prime\prime}x}, \qquad \kappa=\frac{\mu-r}{\sigma}.
\eeq

We can also see the analogous behaviour for both functions, the value function (upper picture) and the optimal consumption (lower picture). Results also were obtained for  four values of the smoothing factor $\eta=0,\; 0.1,\; 0.01,\; 1.0.$ Alternative solution for the case $\eta=0$ also can be computed by analogy with previous case. Similar results for the smoothing parameter $\eta=1.0$ were obtained in the book \cite{R}.
\subsection{Comparison two models, with and  w\textbackslash o pension.}
In this section we illustrate how changes in the smoothing factor $\eta$ will affect our numerical solution. Furthermore, in order to better illustrate how the solutions differ, we compare results with and without pension (see \ref{wo_pension}). 
\subsubsection{ Comparison based on the coordinate transformation $ (t,w,\bar c)\mapsto(t,x).$ }
\label{comparison1}
In this case we scale our results with pension and compare them with those from the previous paragraph (Section \ref{wo_pension}). We provide results below (see Figure (\ref{f26})) for our optimization problem where the solid lines represent how results with pension converge to those without one. Different solid lines represent particular habit values $\bar c.$ 
\begin{figure}[H]
\begin{minipage}[b]{.5\textwidth}
\includegraphics[scale=.27]{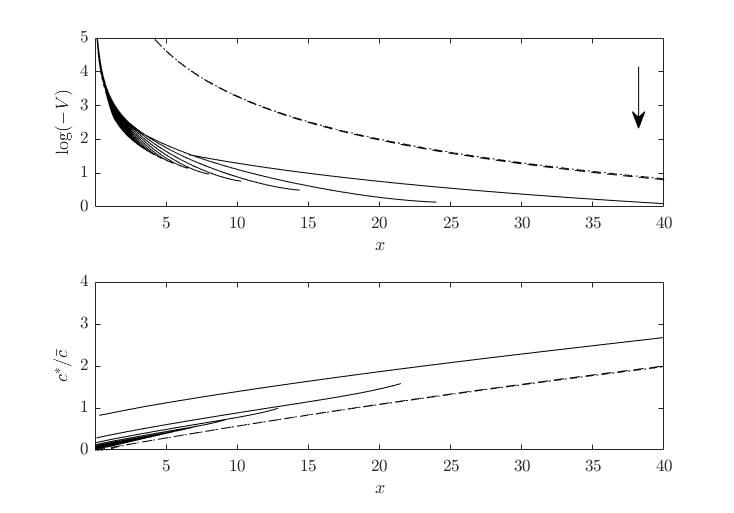}
\subcaption{$\eta=0.01$ }
\label{f26_4}
\end{minipage}
\begin{minipage}[b]{.5\textwidth}
\includegraphics[scale=.27]{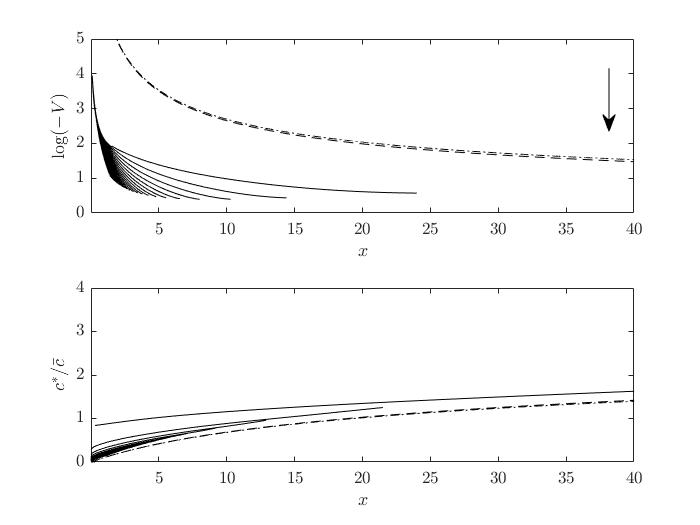}
\subcaption{$\eta=0.1$ }
\label{f26_3}
\end{minipage}
\begin{minipage}[b]{1.0\textwidth}
\centering
\includegraphics[scale=.27]{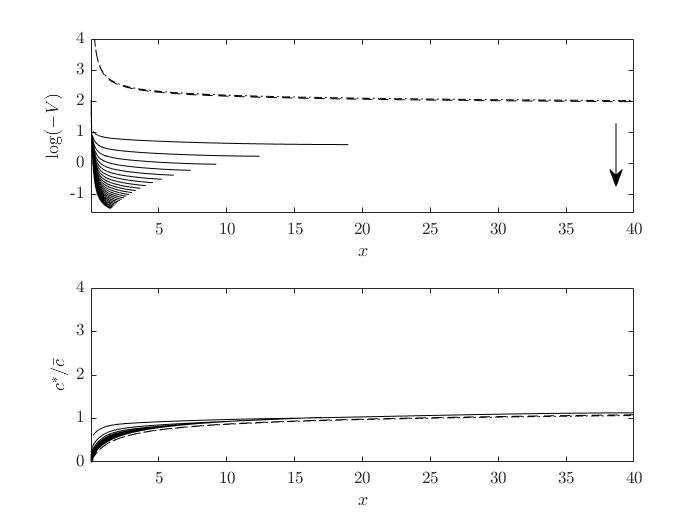}
\subcaption{$\eta=1.0$ }
\label{f26_5}
\end{minipage}
\caption{Scaled value function $V$ (upper picture) and scaled optimal consumption $c^*\slash \bar c$ (bottom picture) vs. scaled weath $x$ for two cases  w\textbackslash o pension, with (dashed line)  and w\textbackslash o (dash-dot line) asset allocation. In the latter, $\theta=0.6.$ The solid lines represent the value function $V$ and scaled optimal consumption $c^*\slash \bar c$ for varioius values of habit $\bar c$.}
\label{f26}
\end{figure}

There are three sets of graphs for three values of the smoothing factor $\eta.$ Every set consists of two pictures where the upper one represents the relationship between the logarithm of the scaled value function $\log(-V)$ and scaled wealth $x.$ The arrow shows the direction of increasing habit and the second graph represents scaled optimal consumption $c^*\slash \bar c$ and scaled wealth $x$. In the cases with pension, we show $V$ and $c^*\slash \bar c$ in place of $\nu(t,x)$ and $q^*,$ for various values of habit $\bar c.$ The dashed line represents a problem where asset allocation is a control variable which proves empirically that an optimal strategy is better then any arbitrary strategy. In our case the dash-dot line represents a fixed parameter $\theta$ (see the bottom pictures on Figure (\ref{f26})). These curves are very close, for reasons we will explain shortly.  Parameters are as in Figure (\ref{test}). 

Results were obtained for three values of the smoothing factor $\eta=10^{-2}$ (see Figure (\ref{f26_4})), $\eta=10^{-1}$ (see Figure (\ref{f26_3})) and $1.0$ (see Figure (\ref{f26_5})). The first picture represents the case where $\eta=0.01$ (Figure \ref{f26_4}). In this case we can see that curves that represent consumption  approach to each other very slow because of great impact of averaging. For the next  value of the smoothing factor, $\eta=0.1,$ consumption $c^*\slash \bar c$ adapts to the habit $\bar c$ faster (Figure \ref{f26_3}). Here we see a smaller gap between the solutions with (solid line) and without pension (dashed and dash-dot lines). This happens because the smoothing factor is significantly bigger which implies less impact from averaging and, as a consequence, faster response to changes in habit. In the last case when $\eta=1.0$ (Figure \ref{f26_5}) we clearly see that changes happen very fast. For relatively big values of scaled wealth $x$ solutions are very close. In order to see indistinguishable results, we would have to consider a bigger scale which is computationally expensive.

We can think about the case when our habit is the same order as pension, $\bar c \approx \pi$ or equivalently $x\approx w/\pi.$ 
If this ratio is small we would expect to see that our numerical results differ from those that represent the case w\textbackslash o pension. At the same time, as $x$ increases, curves that represent the optimal consumption for all three cases approach to each other (see Figure \ref{f26_5}). If we consider an increasing habit $\bar c$ then the difference between our numerical results with and without pension will increase. On the graph habit increases in the downward direction. The next picture shows slightly different behaviour of our numerical solution (see Figure (\ref{f26_4})).  Overall, we can say that consumption with pension income is definitely greater than without pension and it depends on habit and how fast the model reacts to changes, in other words how big the smoothing factor $\eta$ is. 

\subsubsection{ Comparison based on coordinate transformation $(t,x)\mapsto (t,w,\bar c).$ }
\label{comparison2}
In the previous paragraph \ref{comparison1} we discussed results for scaled values of optimal consumption $q^*$ and wealth $x.$  Here we will consider the opposite transformation, $(t,x)\mapsto (t,w,\bar c).$ 
\begin{figure}[H]
\begin{subfigure}{.5\linewidth}
\centering
\includegraphics[scale=0.27]{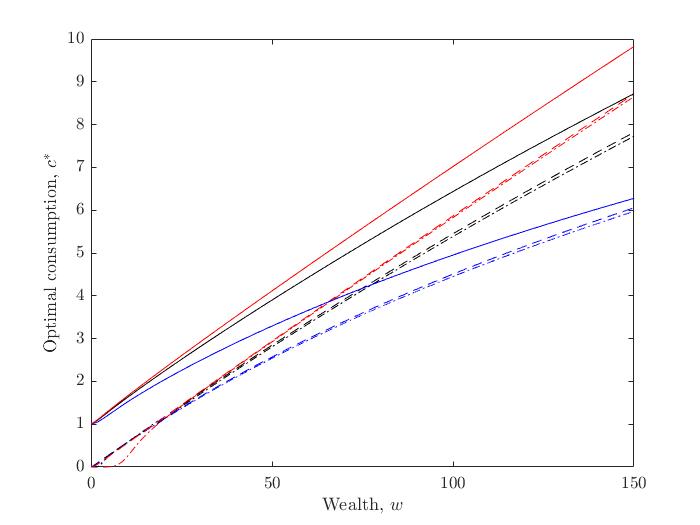}
\caption{$\eta=0.01$ }
\label{f25_3}
\end{subfigure}
\begin{subfigure}{.5\linewidth}
\centering
\includegraphics[scale=0.27]{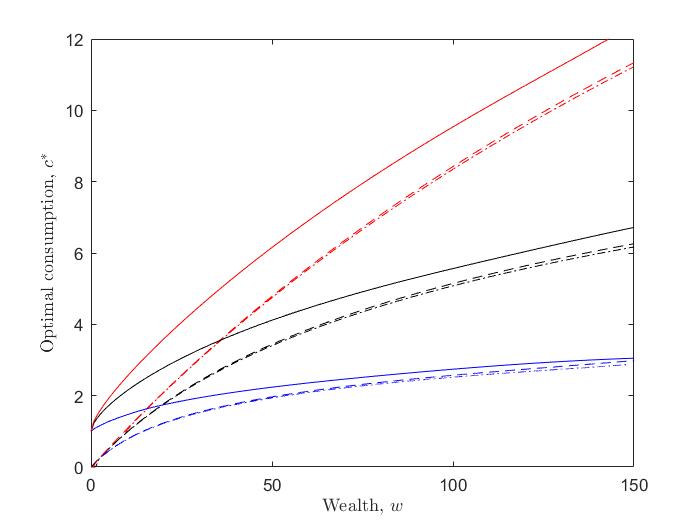} 
\caption{$\eta=0.1$ }
\label{f25_2}
\end{subfigure}\\[1ex]
\begin{subfigure}{1\linewidth}\centering
\includegraphics[scale=0.27]{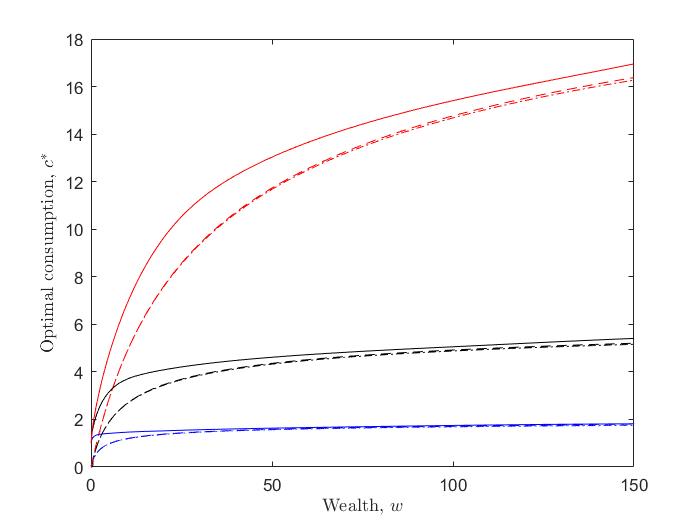}
\caption{$\eta=1.0$ }
\label{f25_1}
\end{subfigure}
\caption{Optimal consumption $c^*(t,w,\bar c)$ vs. weath $w$ for multiple values of habit $\bar c.$ Comparison between results with pension $\pi$ for a fixed parameter $\theta$ (solid line), w\textbackslash o pension and w\textbackslash o asset allocation (dash-dot line) and w\textbackslash o pension and with $\theta_t$ as a control variable (dashed line). The habit is fixed at $\bar c=1$ (blue lines), $5$ (black lines) and $20$ (red lines).}
\label{f25}
\end{figure}

In Figure \ref{f25} we provide results about optimal consumption $c^*$ vs. wealth $w$ for the following set of parameters: $\pi=1,$  $\gamma=3,$ $\mu=0.08,$ $r=0.02,$ $\rho=0.02,$ $\sigma=0.16$ for the problem where the part of the wealth invested into risky assets is fixed, $\theta=0.6.$ As a result, we expect to see convergence results with pension $\pi$ to results without pension for big enough wealth $w.$ Also we  see that the line which represents the  solution with asset allocation (dashed line) is higher then line which characterizes the solution with fixed $\theta$ (dash-dot line). Though  in fact these curves are so close it is hard to distinguish them. This confirms  that the optimal strategy is better then any other strategy. At the same time, if we consider smaller values of the parameter $\eta$ (\ref{f25_3} or \ref{f25_2}), we should expect slower convergence of the results with pension to those without pension. This can be understood via PDE (\ref{PDE_c_opt}). When the parameter $\eta$ is close to $1$ the term that includes it dominates over the term with pension $\pi.$ As a consequence, we see that results with pension converge to those without pension for big wealth and as the parameter $\eta$ becomes smaller the term with pension $\pi$ has more weight in the equation. In order to see convergence we would need to consider a significantly bigger scale. In terms of optimal consumption we can explain this the following way. When the parameter $\eta$ is close to $1$ (see Figure \ref{f25_1}), it means that the current value of consumption dominates. As the smoothing factor becomes smaller $(\eta<<1)$ averaging has more weight and, as a result, the final answer will be smoother and will react less to current changes in values.  Significantly slower convergence is anticipated for small value of the smoothing factor, for example $\eta=0.01$ (see Figure \ref{f25_3}). This makes these results less interesting to explore. 

Above, and in \S \ref{comparison2} we remarked on how close the curves were, corresponding to $\theta=0.6$ and to variable $\theta.$ To understand why, we appeal to Appendix \ref{section_comparison}.
Figure \ref{f49_1} shows that for $\sigma=0.16$ and $\eta=1,$ consumption is insensitive to asset allocation. For $\eta=0.1$ or $0.01$ it is more sensitive to $\theta,$ but only when $\theta$ varies a lot.
\section{Retirement spending plan}
\label{wealth_sim}
\subsection{Wealth depletion }
In order to answer the question of how much a client can consume based on his habit during retirement, we need to simulate some representive scenarios.  We assume that the agent follows the optimal strategy, and then we will use the results obtained in previous sections. 
Below we provide some illustration of our numerical results for three different values of the smoothing factor $\eta=10^{-2}, 10^{-1}$ and $1.0$ (see Figures (\ref{f9})-(\ref{f17})).  
For all cases, numerical results were obtained for the following set of parameters, $\pi=1,$ $\gamma=3,$ $\sigma=0.16,$ $\theta=0.6,$ $\mu=0.08,$ $r=0.02.$ By changing initial wealth $w_0=10,\; 30 $  and initial habit $\bar c_0=2,\; 10,\; 20$ we look at how the wealth and optimal consumption change over retirement. 
\begin{figure}[H]\centering
\begin{subfigure}[c]{0.4\textwidth}
\includegraphics[scale=0.27]{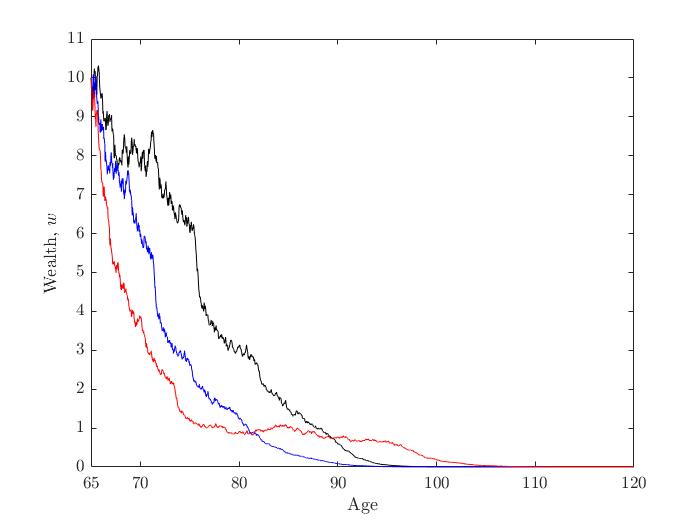}
\subcaption{}
\label{f9_1}
\end{subfigure}
\begin{subfigure}[c]{0.4\textwidth}
\includegraphics[scale=0.27]{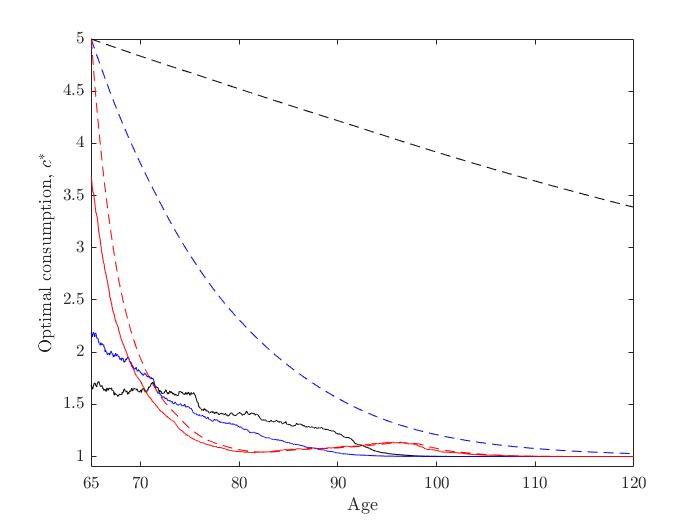}
\subcaption{}
\label{f9_2}
\end{subfigure}
\caption{Wealth and optimal consumption for initial value of wealth $w_0=10$ and initial EWA of consumption $\bar c_0=5$ for three parameters $\eta=0.01$ (black line), $\eta=0.1$ (blue line) $\eta=1.0$ (red line).}
\label{f9}
\end{figure}
In the picture on the left (see Figure (\ref{f9_1})) we can see how wealth changes over retirement between  $65$ and $120$yrs whereas on the right we can see the optimal consumption $c^*$ (solid lines) and habit $\bar c$ (dashed lines). For the small value of $\eta=0.01$ habit changes very slowly. It explains why the two black lines in the picture (see Figure (\ref{f9_2})) are a significant distance from each other. This set of pictures (Figure (\ref{f9})) shows the wealth $w$ and optimal consumption $c^*$ dynamics for initial wealth $w_0=10$ and initial habit $\bar c=5.$ As we can see the wealth's slope is relatively steep for all $\eta$ values and, for example for $\eta=0.01$ and $0.1,$ by the age $100$ the wealth becomes zero. For $\eta <<1$ consumption remains low during the whole time period and asymptotically approaches to the level of the pension by age $100.$ In other words, it means that the agent who follows the optimal strategy in this particular case, i.e. has relatively low initial wealth and  a low habit level, can't afford to spend more than twice the pension and has to remain approximately at this level during all the time period untill wealth ends around age $100$.
As the parameter $\eta$ increases the wealth slope becomes steeper (Figure (\ref{f9_1}))  because current values of consumption dominate over the averaging. This means the retiree consumes more early on.  At the same time, the curve which represents optimal consumption (see Figure (\ref{f9_2})), is closer to the EWA of the consumption line. This means that agent's optimal consumption better reflects habit behaviour. 

In addition, we can see that volatility is the highest when $\eta$ is the smallest. It means that, for example the black lines which represent $\eta=0.01$ have significantly bigger fluctuations then the red lines, which represent $\eta=1.0,$ on both pictures, Figure \ref{f9_1} and \ref{f9_2}.
 \begin{figure}[H]\centering
\begin{subfigure}[c]{0.4\textwidth}
\includegraphics[scale=0.27]{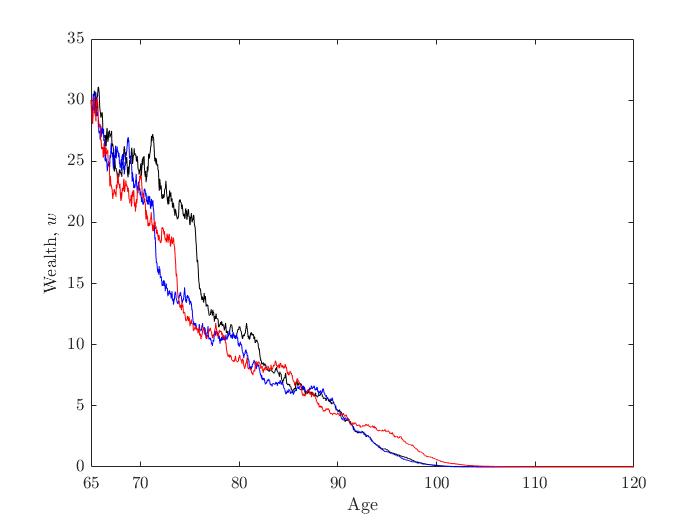}
\subcaption{}
\label{f19_1}
\end{subfigure}
\begin{subfigure}[c]{0.4\textwidth}
\includegraphics[scale=0.27]{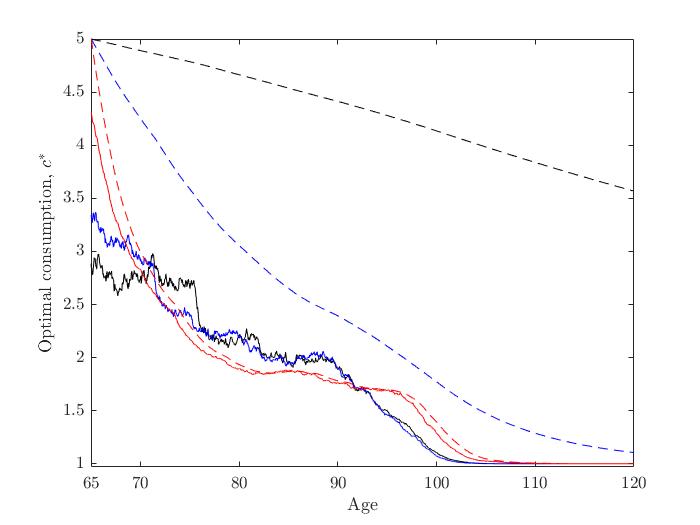}
\subcaption{}
\label{f19_2}
\end{subfigure}
\caption{Wealth and optimal consumption  for initial value of wealth $w_0=30$ and initial EWA of consumption $\bar c_0=5$ for three parameters $\eta=0.01$ (black line), $\eta=0.1$ (blue line) $\eta=1.0$ (red line).}
\label{f19}
\end{figure}
 Now, if we increase initial wealth up to $w_0=30$ and leave habit at the same level we can see that the wealth slope shows a more moderate decline over time. In this case we see that the optimal consumption can fluctuate at the constant level approximately $\sim 20$yrs with  the smoothing factor $\eta=1.0.$ 

The explanation of this behaviour is simple. It happens because we accepted a low level of habit $\bar c=5$ which means that the person did not consume a lot over previous time periods but has large enough wealth $w$, for his optimal consumption $c^*$ to show positive dynamics over the next time period.

Based on results for different parameters $\eta$ (Figure \ref{f9} or \ref{f19}) we can see that  changes in the agent's optimal consumption happen very fast for $\eta=1.0$ which means that at every time the agent can consume approximately the same as his habit. Moreover, we see that the level of consumption in this case is also higher than for smaller $\eta$ (solid red line Figure \ref{f9_2} or \ref{f19_2}). If we compare Figure \ref{f9_1} and \ref{f19_1} we conclude that the value of habit is very important. In cases where client does not consume a lot $(\bar c_0=5)$ we  see that higher initial wealth creates less difference between different models or parameters $\eta.$

We can consider another case where we vary the habit level $\bar c_0$ for a smoothing factor $\eta=1.0$ (see Figure \ref{f17}). The overall  observation is that with a higher initial habit $\bar c_0=20,$ less time is needed for wealth to reach zero and for consumption go down to the pension level. 

 \begin{figure}[H]\centering
\begin{subfigure}[c]{0.4\textwidth}
\includegraphics[scale=0.27]{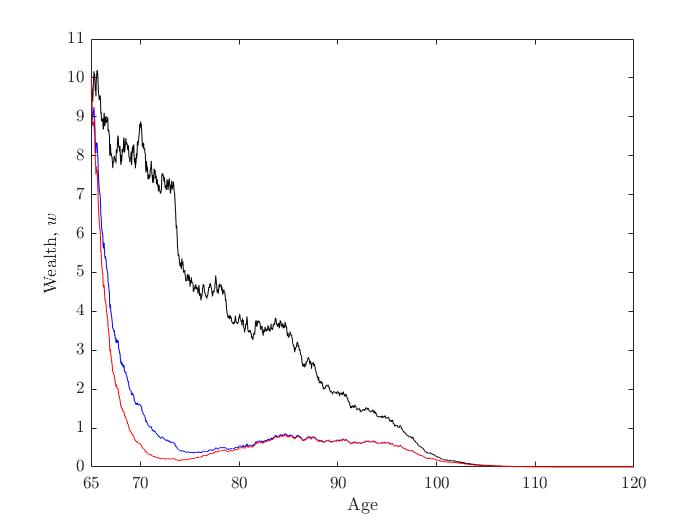}
\subcaption{}
\label{f17_1}
\end{subfigure}
\begin{subfigure}[c]{0.4\textwidth}
\includegraphics[scale=0.27]{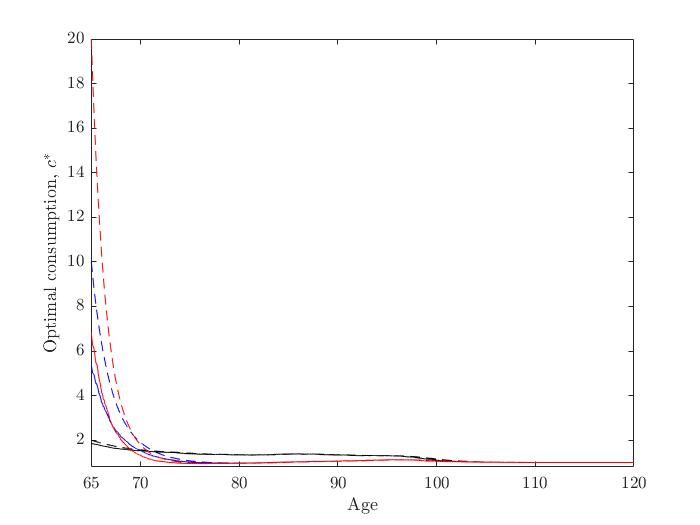}
\subcaption{}
\label{f17_2}
\end{subfigure}
\caption{Wealth and optimal consumption for an initial value of wealth $w_0=10,$ smoothing factor $\eta=1.0$ and three values of initial habit $\bar c_0=2$ (black line), $\bar c=10$ (blue line) $\bar c=20$ (red line).}
\label{f17}
\end{figure}
%

\subsection{Wealth depletion time (WDT)}
In the last section, we obtained simulation results using an Euler-Maruyama (EM) method. In this paragraph we will obtain WDT  $T_d$  using a PDE approach and will compare results with Euler's method. Following the same logic as in the article \cite{H} let us define  $T_d$ as expected wealth depletion time $ T_d(t,w,\bar c)=E[\tau|w_t=w,\bar c_t=\bar c,c_t^*=c^*]$ with an additional variable $\bar c$ that represents habit. Here $\tau$ is the first time that $w_t=0.$ We obtain a PDE
\beq
 T^d_t+ T^d_w( (\theta(\mu-r)+r)w+\pi-c^*)+ T^d_{\bar c}\eta (c^*-\bar c)+  \frac{1}{2}T^d_{ww}\theta^2\sigma^2w^2+1=0.
\eeq
Boundary conditions for this equation are similar to the previous case except  at the boundary with zero wealth, $w=0.$ WDT here is zero, $T^d(t,0,\bar c)=0.$ As in the previous case, in order to solve this equation, we use an implicit upwind method which is described in detail in Appendix \ref{appendix_FDS}. 

As a next step, we run more simulations for the different sets of initial wealth and habit and compute their mean and variance in order to see how the time depletion changes if we fix the wealth value and take the average over all habit values. We summarize our results in the Table \ref{table_WDT} where every entry is the average age at which wealth depletes. By ``age'' we mean 
${\textrm age}={\textrm age\; of\; retirement}+{\textrm time}$.  
As before, Table \ref{table_WDT} reflects an initial age of $65.$ As soon as  wealth reaches level zero we record our depletion time. In other words, this means that the retiree has spent all his savings and does not have income except his pension. 
\begin{table}
  \centering
 \caption{Comparison between Monte Carlo and PDE results for $\bar c=10$.}
  \begin{tabular}{c|c|c|c|c|c|c}\hline 
    \multirow{2}{*}{\makecell{$w_0$\\  values}} & \multicolumn{3}{c|}{EM} &
                                                            \multicolumn{3}{c}{PDE}  \\ \cline{2-7}
            &  $\eta=0.01$              &     $\eta=0.1 $         &      $\eta=1.0$           &    $\eta=0.01$    &    $\eta=0.1 $   &  $\eta=1.0$                 \\ \hline
    $1$   &   $81.8\pm 2.1^*$  &  $70.6\pm 4.6$    &   $68.3 \pm 7.6 $       &       $81.5$        &    $70.2$        &  $68.7$            \\ \hline
    $5$   &   $91.5\pm 2.6$      &  $78.8\pm 5.1$    &   $76.5 \pm 13.1 $      &      $91.0$        &    $80.8$         &  $84.9$              \\ \hline
    $10$  &  $95.5\pm 2.7$      &  $84.0\pm 5.5$    &   $86.1\pm 14.4$        &      $95.3$        &    $87.8$         &  $93.8$               \\ \hline
    $20$  &  $99.4\pm 2.6$      & $90.3\pm5.5$      &   $99.3 \pm 4.0 $        &      $99.7$        &    $95.5$         &  $100.0$              \\ \hline  
    $35$  &  $102.6\pm 2.4$    & $95.5\pm 5.3$     &   $100.9 \pm 2.6 $      &     $102.8$       &   $100.7$         &  $103.4$                \\ \hline  
    $50$  & $104.5\pm 2.3$     & $99.0\pm 4.5$     &   $102.4 \pm 2.7 $      &     $104.6$       &   $103.4$         &  $104.9$              \\ \hline  
    $75$  &  $105.8\pm 2.1$    & $102.1\pm 3.9$   &   $103.7 \pm 2.7 $      &     $106.1$       &   $105.1$          & $105.7$              \\ \hline  
 \multicolumn{7}{c}{$*$ WDT range mean 
$\pm {\textrm std}$}
  \end{tabular}
  \label{table_WDT}
 \end{table}

To compare the EM and PDE methods, we choose three values of the smoothing factor $\eta=[0.01\; 0.1\; 1.0]$. This parameter reflects how fast habit reacts to changes in consumption. Based on the results from the Table \ref{table_WDT} we can see that the wealth depletion time function shows nonlinear behaviour, which could be subject for the additional analysis in later research. When we set this parameter equal to $1$ we see that averaging does not dominate. When this parameter decreases the averaging plays a greater role. In this paper we consider two more values of this parameter, namely $\eta=0.1$ and $0.01.$  The first column in the table represents the initial wealth $w_0$ values. In our case we examined seven different values $[1\; 5\; 10\;20\;35\;50\; 75]$. 
For instance, if we  look at the third row and second column, the corresponding values 
$(w_0,\;{\textrm age})$ 
are $(5,\; 91.5\pm 2.6)$ which means that if the initial wealth equals $5$ then the average time when the wealth depletes is $91.49$ yrs. with a standard deviation $2.6.$  The trend for every fixed smoothing factor $\eta$ is, as initial wealth $w_0$ increases, depletion time also increases. Moreover, for small values of wealth for different values of the parameter $\eta$ the difference in WDT is very big. For $w_0=1$ it is $\sim 67$yrs for $\eta=1.0$ and $\sim 82$yrs for $\eta=0.01.$ 

At the same time, as initial wealth grows, the difference in depletion time becomes less significant. For example, there is no significant difference between depletion age for big wealth $75$ as $\eta$ varies with average values varying within the range $(102,\; 106)$ for EM and  $(105, 106)$ for PDE. If we compare the EM and PDE results we see that PDE solution is included in our EM intervals.  
\begin{table}
  \centering
 \caption{Numerical  PDE results for $\eta=0$.}
  \begin{tabular}{c|c|c|c|c|c|c|c}\hline 
   $w_0$  &      $1$      &     $5$      &      $10$      &    $20$        &     $35$      &    $50$      &  $75$             \\ \hline
    PDE     &   $85.7$   &  $93.0$   &    $96.8$   &  $100.6$    &  $103.4$  &  $105.1$  & $106.4 $        \\ \hline
    \multicolumn{8}{c}{WDT range mean $\pm$ std$=100.5\pm 9.6.$ }\\ \hline
  \end{tabular}
  \label{table_WDT_eta0}
 \end{table}

There is one more case of smoothing parameter $\eta=0$ which means that we do not take habit into consideration. Table \ref{table_WDT_eta0} shows these results for the same set of initial wealth as in the previous case. The trend is the same as before, i.e. as soon as wealth increases WDT also increases. Comparing WDT for small values of wealth, with and without habit, it is greater in the second case because we do not count on habit and just follow the optimal strategy. 

We can explain this as follows. Assume that we have two clients and one of them follows the optimal strategy ignoring habit. According to the table \ref{table_WDT_eta0} if he has initial wealth $5$ at the moment of his retirement then he will spend all his money by age $93.$ Assume that another client follows the optimal strategy under habit formation model. In this case if he has the same initial amount of wealth $5$ he will spend all his money by age $91$ if the parameter value is $\eta=0.01$ or by age of $ 85$ for $\eta=1.0.$ As we know the greater the value of the smoothing parameter we have, the faster consumption will react to habit changes which means that the person with a higher value of the parameter $\eta$ can consume more. Therefore for big values of initial wealth the difference in WDT between $\eta$ is not that big.

Now, we provide some graphs (see Figures (\ref{WDT__}) - (\ref{WDT_3})) that illustrate how the results change for different smoothing factor values.
 \begin{figure}[H]\centering
\begin{subfigure}[c]{0.4\textwidth}
\includegraphics[scale=0.27]{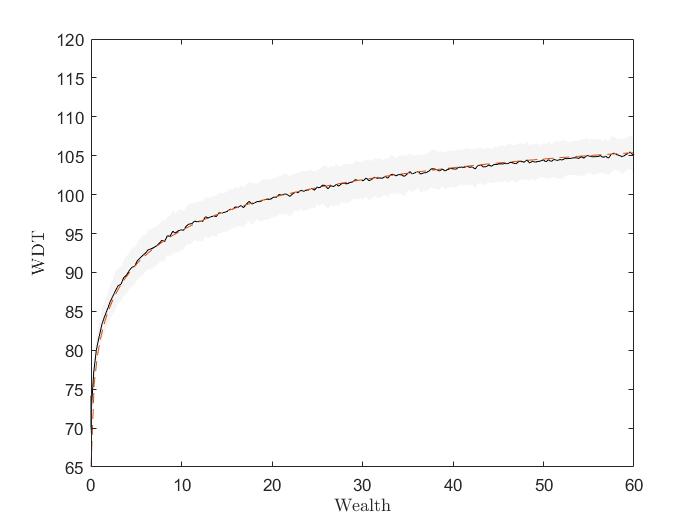} 
\caption{$\eta=0.01$ }
\label{WDT_1}\label{f20_1}
\end{subfigure}
\begin{subfigure}[c]{0.4\textwidth}
\includegraphics[scale=0.27]{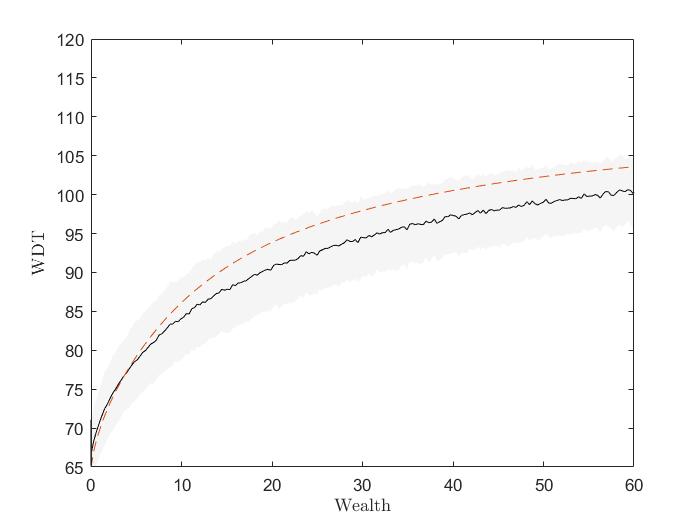} 
\caption{$\eta=0.1$}
\label{WDT_2}
\end{subfigure}
\begin{subfigure}[c]{1.0\textwidth}\centering
\includegraphics[scale=0.27]{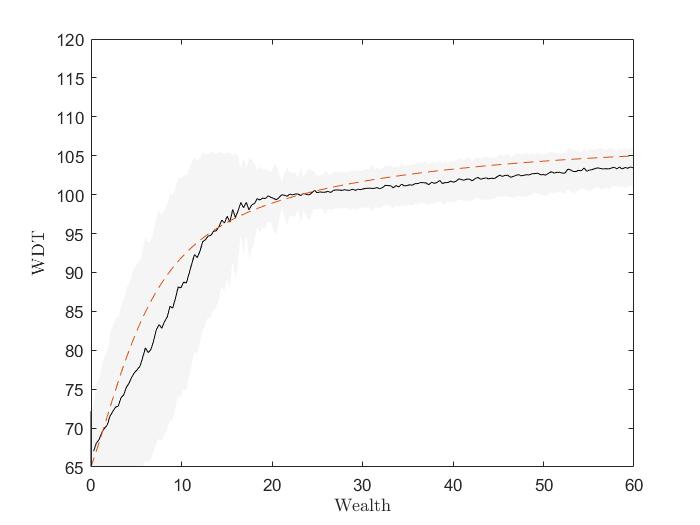}
\caption{$\eta=1.0$ }
\label{WDT_3}
\end{subfigure}
\caption{WDT vs. initial wealth for fixed habit $\bar c=10.$}
\label{WDT__}
\end{figure}
For all three pictures, the solid line represents the EM solution, the dashed line the PDE solution and the shadowed area  represents the standard deviation of the stochastic solution over all values of initial habit $\bar c_0$.
For comparison we fixed habit at the level $\bar c=10.$ As we can see the case with $\eta=0.01$ (Figure \ref{WDT_1}) has the smallest shadow area and the two lines are almost indistinguishable which means  both solutions, EM and PDE, are in the good agreement. As the smoothing factor increases to $\eta=0.1,$ the shadowed area becomes bigger as well (Figure \ref{WDT_2}) and the lines are not as close as in the previous case. But still the PDE solution lies within the stochastic solution boundaries. The shadowed area reflects the stability of the particular solution, which is why for $\eta=1.0$ (Figure \ref{WDT_3}) we see an irregular shadow.
\section{Liquid wealth vs. annuity}
\label{wealth}
Suppose that there is a client who at the time of retirement, for example at age $65,$ has a certain amount of wealth $w$ (non annuitized) and pension $\pi,$ such as Social Security benefits. There are several papers where authors showed that under certain conditions it is better to convert their wealth into annuities (see for example \cite{Rei}). In this paper we answer the question, should the endowment be annuitized at the moment of retirement or not, under HFM. First, let us define the annuity equivalent wealth (AEW). Based on the definition provided in the article \cite{MH}, the AEW is the quantity $\hat w$ that satisfies the following equation:
\be
V(0,\hat w,\bar c, \pi)= V(0,0,\bar c,\pi+w/a_x)
\label{AEW}
\ee
where $V$ is the maximized discounted lifetime utility function, $w$ represents initial liquid wealth, $a_x$ is the annuity factor and $x$ is the client's current age. The idea is that a client with initial wealth $w$ and a certain utility should retain the same utility level if he decides to annuitize the entire wealth.  In this paper instead we explore  whether it is reasonable to annuitize part of the retiree's wealth into an annuity under HFM.  We compare  two value functions, namely $V(0,w,\bar c,\pi)$ and $V(0,w-\Delta w,\bar c,\pi+\Delta w/a_x)$. Before we move forward let us say a couple words about the annuity factor $a_x.$
We can define $a_x$ as follows
\be
a(r,x,m,b)=\disp\frac{b \Gamma(-rb, exp(\frac{x-m}{b}))}{exp(r(m-x)-exp(\frac{x-m}{b}))}
\ee
where all values reflect the Gompertz formulation (see for example \cite{M}) and all parameters are as described in \S \ref{theor_backgr}. We can think about the following financial definition of the annuity factor $a_x.$ In case the retiree converts  some or all of his wealth into annuities, the annuity factor is their unit price $a_x.$ 

Now, let us introduce new notation. We define  the difference $\Delta V$ on the graphs as follows 
\be
 \Delta V=V(0,w,\bar c,\pi) - V(0,w-\Delta w,\bar c,\pi+\Delta w/a_x).
 \label{difference}
 \ee
Below we provide some numerical results for our model where we take into consideration the client's habit $\bar c$. For the case with smoothing parameter $\eta=0,$ which means that
we do not include habit, calculations were done in the paper \cite{MH} but for the case where the client does not invest anything into risky assets, i.e. $\theta=0.$ In this paper we will check if the inequality, that those authors derived in their article, holds for HFM
\be
 V(0,w,\bar c,\pi) < V(0,w-\Delta w,\bar c,\pi+\Delta w/a_x). 
 \label{inequality}
\ee
In other words, this inequality means that for the client at age $65$ it is better to annuitize part of the wealth $\Delta w$  when  $\eta=0.$ At the same time, as the parameter $\eta$ increases the picture changes.  As soon as the retiree invests more into risky assets, for example $20\%$ or $60\%$ (see Figure \ref{f4_01}) of his portfolio, the difference becomes positive for wealth below a certain level. On the graphs we see the relationship between the difference $\Delta V$ (\ref{difference}) and wealth $w$. Results were obtained for fixed asset allocation and habit value $\bar c\approx 10$ and for four values of the parameter $\eta=[0 \;0.01\; 0.1\; 1.0]$ (see Figure \ref{f4_01}). 

As we can see, for example from the Figure (\ref{f4_01a}) where only $20\%$ of portfolio is invested into risky assets, and $\theta=0.2$, when the effect of current consumption is strong (eg. $\eta\sim 1$) little benefit accrues from the converting wealth into annuities at a retirement age of $65$yrs. For $\eta=0.01$ the impact of annuitization is significant at large wealth. 
 \begin{figure}[H]\centering
\begin{subfigure}[c]{0.4\textwidth}
\includegraphics[scale=0.27]{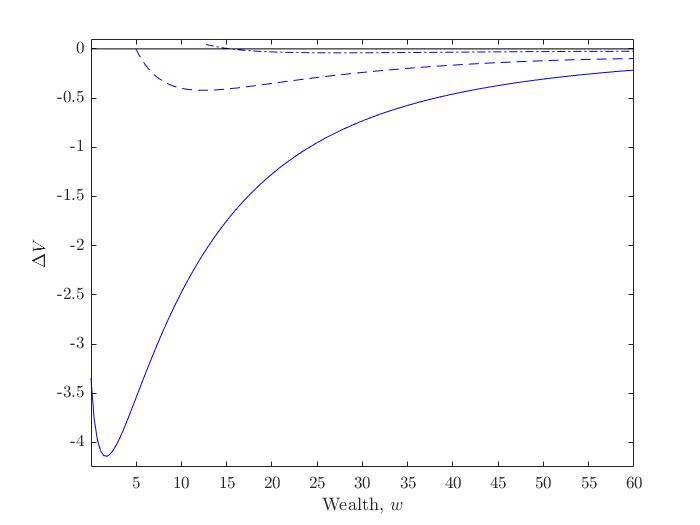} 
\subcaption{$ \theta=0.2$}
\label{f4_01a}
\end{subfigure}
\begin{subfigure}[c]{0.4\textwidth}
\includegraphics[scale=0.27]{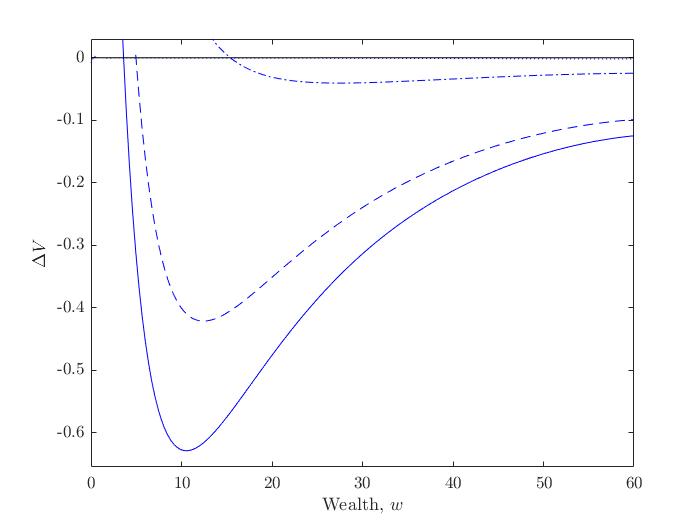}
\subcaption{$\theta=0.6$}
\label{f4_02}
\end{subfigure}
\caption{Difference $\Delta V$ vs. wealth $w$ for fixed habit $\bar c\approx 10,$ smoothing parameter  $\eta=0$ (the blue solid line), $0.01$ (the blue dashed line), $0.1$ (the blue dash-dot line) and $1.0$ (the blue dotted line).  The black horizontal line represents zero level in $\Delta V$.}
\label{f4_01}
\end{figure}
Similar behaviour can be  seen for the parameter value  $\theta=0.6$ (see Figure \ref{f4_02}).  The observation that we can make after taking into consideration all numerical results is the following: as soon as the retiree decides to invest more into risky assets, for example $\theta=0.6,$ the behaviour of the function that represents the difference $\Delta V$ changes and becomes positive below a certain level, which means  that client should not convert his endowment into annuities unless he has at least a minimum wealth level. 
\begin{table}
  \centering
 \caption{Wealth values that show when to annuitize.}
  \begin{tabular}{c|c|c|c|c|c|c}\hline 
    \multirow{2}{*}{$\theta$  values} & \multicolumn{2}{c|}{ $\bar c=1$} &
                                                            \multicolumn{2}{c|}{$\bar c=5$}  & 
                                                                   \multicolumn{2}{c}{$\bar c=10$}  \\ \cline{2-7}
           &$\eta=0.01$ &$\eta=1.0 $  &   $\eta=0.01$    & $\eta=1.0$  &    $\eta=0.01$    & $\eta=1.0$     \\ \hline
    0     &    $\star$  &     $\star$           &   $\star$           & $7.77$        &       $\star$     & $14.47$      \\ \hline
    0.2  &    $\star$  &     $\star$           &   $\star$           &  $8.47$   &  $\star$ &  $15.53$       \\ \hline
    0.6  &    $3.18$       &     $2.82$           &   $4.94$        & $12.35$   & $5.29$ & $21.17$  \\ \hline  
      \multicolumn{7}{c}{$\star$ means that the difference $\Delta V$ is negative for all $w$}  \\ \hline     
  \end{tabular}
  \label{table}
 \end{table}

To summarize the results,  Table \ref{table} shows when it is better to annuitize wealth for two values of the smoothing factor $\eta=0.01$ and $1.0,$ some values of asset allocation $\theta=0,\;0.2,\; 0.6$ and three values of habit $\bar c\approx 1,\; 5,\; 10$ (see Table \ref{table}). For instance, if we consider a small value of habit $\bar c=1.0$ it is definitely better to annuitize wealth for almost all values of asset allocation whereas for $\bar c=10$ this is not true. 

 For example, value $15$ in the fifth row and fifth column means that for the smoothing factor $\eta=1,$  with habit $\bar c\approx 5$ and 60\% of wealth invested into risky assets, the minimum value of the wealth when it is reasonable to annuitize at age $65$ is approximately $15,$ which is significantly greater than pension income.
The last column in the table shows that if the habit $\bar c$ is relatively high, in our case $\bar c=10,$ then it does not matter how much you invest into risky assets, the response is very fast, i.e. we have less impact from averaging ($\eta\approx 1$), and for all cases the value of wealth when it is reasonable to convert part of the wealth into annuities is greater than $15.$  In other words, the wealth should be $15$ times greater than pension income, such as $17$ for $\theta=0.2$ or $w\approx 20$ for $\theta=0.6.$ 

The last set of pictures (see Figure \ref{annuity_4}) represents the scaled version of our numerical results, where scaling was described in one of the previous sections \ref{subsection_1cs}. We use two values of the smoothing parameter $\eta=0.01$ and $1.0.$ The y-axis represents the difference $\Delta V$ (\ref{difference}) and the x-axis is a logarithm of the ratio between wealth $w$ and habit $\bar c$, $\log(w/\bar c).$  We see  two areas which show that the question about wealth annuitization does not have an unambiguous answer. Every line represents a different value of habit $\bar c$ that increases in the direction pointed to by the arrow. In the left Figure (\ref{annuity_4a}) we see results for $\eta=0.01$ which means that averaging dominates and habit adapts slowly. Here there are regions where annuitization is favourable. Whereas in the right picture (\ref{annuity_4b}) averaging has the least impact on our solution and, as a consequence, habit adapts repidly. Here the answer is that for small wealth it is not optimal to annuitize part of the wealth. 

Finally, we can say that all results show that the presence of habit makes annuitization less reasonable for small values of wealth $w$ while it becomes a good choice for large wealth. At the same time, the more a retiree invests into risky assets the less favourable it will be to annuitize part of the wealth immediately at the age of $65.$ 
 \begin{figure}[H]\centering
\begin{subfigure}[c]{0.4\textwidth}
\includegraphics[scale=0.27]{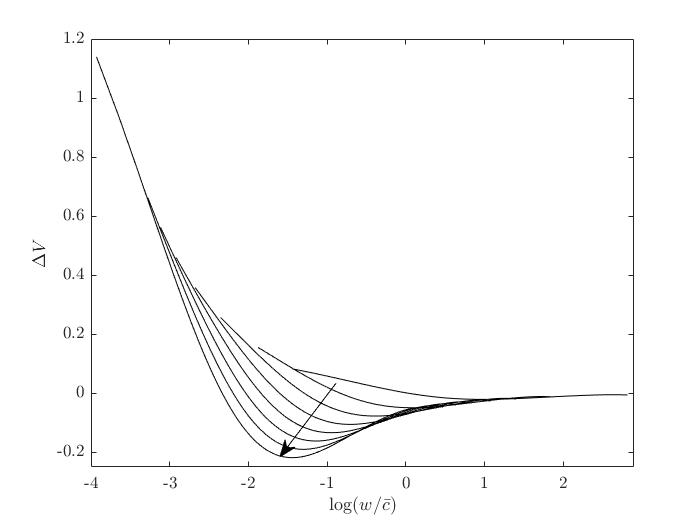}
\subcaption{$\eta=0.01$}
\label{annuity_4a}
\end{subfigure}
\begin{subfigure}[c]{0.4\textwidth}
\includegraphics[scale=0.27]{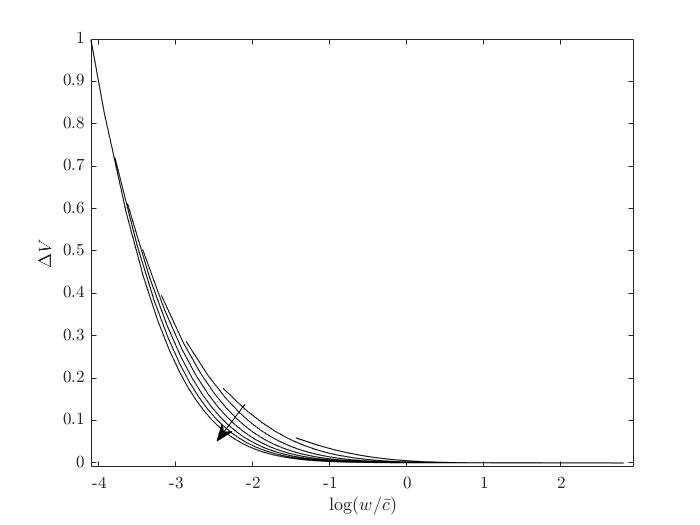}
\subcaption{ $\eta=1.$}
\label{annuity_4b}
\end{subfigure}
\caption{Difference $\Delta V$ vs. scaled wealth $w$ for fixed asset allocation $\theta=0.6.$ The arrow shows the direction of increasing habit $\bar c.$}
\label{annuity_4}
\end{figure}
%
\section{Conclusion}
In this paper we solved the retirement optimization problem under a habit formation model. The goal was to explore how the exponentially weighted average of consumption affects the optimal consumption strategy by changing different parameters values, such as the smoothing factor $\eta,$ asset allocation $\theta$ and volatility $\sigma$. Our results show that there is a strict correspondence between these parameters. Also we answered  the question of how much the potential client can spend during his retirement if he follows the optimal strategy based on his habit $\bar c$. Finally, we provided some analysis about how this optimal spending strategy changes, by varying the initial values of wealth $w_0$ and habit $\bar c_0$ and analyzing under what conditions the client should convert his wealth at the age of retirement into annuities.


\newpage
\begin{appendix}
    \section{Theoretical Background}
    \label{appendix_theorem}
In order to find solution of our OP we need to understand if there exist a function which satisfies our constraints and if it is unique. In other words, there are some questions which should  be answered before solving the problem:
\begin{enumerate}
\item Prove that  a smooth solution of an HJB is a either supermartingale or martingale.
\item Prove that supermartingale or martingale solution solves the OP.
\end{enumerate}
Let us start from proving first statement.  Suppose we have a smooth function $\tilde V$ that solves the HJB equation. Define a stochastic process $Y_t$ by
\be
Y_t\equiv \tilde V(t,w_t,\bar c_t)+\int_0^t e^{-\rho t^{\prime}}\tensor[_{t^{\prime}}]{p}{_x} u\left(\frac{c_{t^{\prime}}}{\bar c_{t^{\prime}}}\right)\d t^{\prime}.
\label{TB1}
\ee
The goal is to prove that the stochastic process $Y_t$ is a supermartingale for any admissible choice of consumption $c_t$ and martingale for the optimal  value of consumption $c^*.$ There are at least two approaches how this statement can be proved, for example, we can prove  it by taking expectation $E[Y_t|\EuScript{F}_s]$ and then using Fatou's lemma \cite{D}. Here, we will provide another approach.
Take the differential of this expression (\ref{TB1}) and compute in the drift and volatility terms. Then, plug the expression for wealth and consumption rate dynamics (\ref{TB0}).
After performing some calculations we get the following
\beq
\begin{array}{r}
\vspace{0.2cm}
\d Y_t=\Bigl\{\tilde V_t+\tilde V_w(\theta(\mu-r)+r)w_t+ \tilde V_w(\pi-c_t)+\tilde V_{\bar c}\eta (c_t-\bar c_t)+ \Bigr.\\ 
\Bigl.\frac{1}{2}\tilde V_{ww}\theta^2\sigma^2w^2+e^{-\rho t}\tensor[_t]{p}{_x} u\left(\disp\frac{c_t}{\bar c_t}\right)\Bigr\}\d t  +\tilde V_w\theta \sigma w_t \d W_t. 
\end{array}
\label{TB3}
\eeq
Then integrate (\ref{TB3}) $\int_0^t$ assuming that function that represents drift term can be written as $f(t,w_t,\bar c_t)$ and volatility term as $g(t,w_t,\bar c_t).$
\be
\int\limits_0^t\d Y_{t^{\prime}}=\int\limits_0^tf(t^{\prime},w_{t^{\prime}},\bar c_{t^{\prime}})\d t^{\prime}+\int\limits_0^t g(t^{\prime},w_{t^{\prime}},\bar c_{t^{\prime}})\d W_{t^{\prime}}
\label{TB4}
\ee
Using the additivity property and taking expectation $E[\ldots|\EuScript{F}_s] \quad \forall s\le t$ after that,  we obtain the following expression
\beq
 E[Y_t|\EuScript{F}_s]= E[Y_s|\EuScript{F}_s]+E\Bigl[\int\limits_s^t f(t^{\prime},w_{t^{\prime}},\bar c_{t^{\prime}})\d t^{\prime}|\EuScript{F}_s\Bigr]+\nonumber \\
 E\Bigl[\int\limits_s^t g(t^{\prime},w_{t^{\prime}},\bar c_{t^{\prime}})\d W_{t^{\prime}}|\EuScript{F}_s\Bigr].
 \eeq
 Now, recall that the drift term $f(t,w,\bar c)$ is zero in the case of a martingale  or negative if we have a supermartingale, then if we assume that integrand $g(t,w,\bar c)$ is square integrable then 
\be
E\Bigl[\int\limits_s^t g(t^{\prime},w_{t^{\prime}},\bar c_{t^{\prime}})\d W_{t^{\prime}}|\EuScript{F}_s\Bigr]=0.
\ee
Then $ E[Y_t|\EuScript{F}_s]\le Y_s.$
The statement is proved.  Now, let us reformulate the third statement. 

\begin{theorem} (verification theorem): 
\label{verification_theorem}
Suppose  $\exists \tilde V:[t,T]\times \RR^+ \times \RR^+\to \RR$ which is $C^{1,2}.$
The objective has the form (\ref{TB5}).
Suppose that $\forall c_t \in  \mathscr{A}(w)$ where $ \mathscr{A}(w)\equiv \{c : c\,$ is  admissible  with regards to $ w\}$ the stochastic process 
\beq
Y_t\equiv \tilde V(t,w_t,\bar c_t)+\int\limits_{0}^t e^{-\rho s}\tensor[_s]{p}{_x}u\left(\frac{c_s}{\bar c_s}\right)\dd s \quad {is \; a\; supermartingale}
\label{verification}
\eeq
and $\tilde V(T,w,\bar c)=0,$ also $\exists c^* \in  \mathscr{A}(w)$ such that the process $Y_t$ is a martingale. Then $c^*$ is optimal, and  solution of the problem is 
\beq
\tilde V(t,w,\bar c)=\sup_{c_t} E\left[\int_t^T e^{-\rho s}\tensor[_s]{p}{_x}u\left(\frac{c_s}{\bar c_s}\right)\dd s|w_t=w,\bar c_t=\bar c\right].
\label{verification_}
\eeq
The consumption $(c_t)_{t>0}$ is admissible for wealth $w$  if the wealth process $w_t$ remains positive at all time. 
\end{theorem}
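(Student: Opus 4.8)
The plan is to run the classical verification argument for a finite-horizon control problem, taking as input the computation immediately preceding the statement, which shows that $Y_t$ of (\ref{verification}) is a supermartingale for every admissible $c$ and a martingale for $c^*$. Fix initial data $w_t=w$, $\bar c_t=\bar c$ and an arbitrary $c\in\mathscr{A}(w)$. Since $Y$ is then a supermartingale on $[t,T]$, evaluating the supermartingale inequality at the \emph{deterministic} terminal time $T$ gives $E[Y_T\mid\EuScript{F}_t]\le Y_t$. I would then substitute the definition of $Y$ at both ends: the terminal condition $\tilde V(T,w_T,\bar c_T)=0$ removes the first term from $Y_T$, so $Y_T=\int_0^T e^{-\rho s}\tensor[_s]{p}{_x}u(c_s/\bar c_s)\,\dd s$, whereas $Y_t=\tilde V(t,w,\bar c)+\int_0^t e^{-\rho s}\tensor[_s]{p}{_x}u(c_s/\bar c_s)\,\dd s$. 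Because the integral over $[0,t]$ is $\EuScript{F}_t$-measurable it cancels from both sides, leaving
\[
E\!\left[\int_t^T e^{-\rho s}\tensor[_s]{p}{_x}u(c_s/\bar c_s)\,\dd s\,\Big|\,\EuScript{F}_t\right]\le\tilde V(t,w,\bar c).
\]

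Taking the supremum over admissible $c$ and comparing with the definition (\ref{TB5}) of the value function yields $V(t,w,\bar c)\le\tilde V(t,w,\bar c)$. For the reverse inequality I would repeat the same steps with $c=c^*$: by hypothesis $Y$ is then a martingale, so the displayed inequality becomes an equality, i.e. $E[\int_t^T e^{-\rho s}\tensor[_s]{p}{_x}u(c^*_s/\bar c_s)\,\dd s\mid\EuScript{F}_t]=\tilde V(t,w,\bar c)$, where $\bar c$ here denotes the habit path generated by $c^*$. Hence the supremum in (\ref{TB5}) is attained at $c^*$, so $c^*$ is optimal and $V=\tilde V$, which is precisely (\ref{verification_}).

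The delicate point is not this algebra but justifying each step. The main obstacle is to ensure that the stochastic-integral term in the drift-plus-diffusion decomposition of $\dd Y_t$ in (\ref{TB3}) is a genuine martingale and not merely a local one, i.e. that $\tilde V_w\,\theta\sigma w_t$ is square-integrable on $[0,T]$ along admissible trajectories; absent a priori bounds I would argue by localisation, applying the super-/martingale identities along a sequence of stopping times $\tau_n\uparrow T$ and then passing to the limit. Here the CRRA specification with $\gamma>1$ (so $u\le 0$) is genuinely helpful: the running-utility integral is then monotone and bounded above by $0$, so monotone/dominated convergence apply cleanly and all the conditional expectations are well defined with values in $[-\infty,0]$ — in particular nothing breaks if the value function happens to equal $-\infty$ somewhere. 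Second, admissibility must be invoked to get $w_t>0$, and together with $\bar c_t= e^{-\eta t}\bar c_0+\eta\int_0^t e^{-\eta(t-s)}c_s\,\dd s\ge e^{-\eta t}\bar c_0>0$ (consumption being nonnegative) this makes $c_s/\bar c_s$, hence $u(c_s/\bar c_s)$, well defined along every admissible path. Finally, the existence of an $\RR$-valued $C^{1,2}$ function $\tilde V$ satisfying the terminal condition is assumed, so the argument really only requires the integrability/localisation step above, which is where essentially all of the care is needed.
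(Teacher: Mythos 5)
Your argument is correct and follows essentially the same route as the paper's proof: apply the supermartingale inequality between $t$ and the deterministic terminal time $T$, use $\tilde V(T,\cdot,\cdot)=0$ to cancel the terminal value-function term and the $\EuScript{F}_t$-measurable integral over $[0,t]$, take the supremum over admissible controls, and obtain equality from the martingale property under $c^*$. Your additional attention to localisation of the stochastic integral, the sign of the CRRA utility for $\gamma>1$, and the strict positivity of $\bar c_t$ supplies detail the paper handles only by assuming square-integrability of the diffusion coefficient in the preceding computation, but it does not change the structure of the argument.
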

\begin{proof}
The supermartingale property states that $E[X_t|\EuScript{F}_s]\le X_s,$ $0\le s\le t,$ $E[X_s]<\infty$ (see, for example \cite{D}). The goal is to prove that the following inequality holds
\be
E[Y_T|\EuScript{F}_t]\le Y_t.
\ee
By plugging values ``T'' and ``t'' consequently into equation (\ref{verification}) and performing simple calculations we can get 
\beq
\tilde V(t,w,\bar c)\ge E\Bigl[\int\limits_t^T e^{-\rho t^{\prime}}\tensor[_{t^{\prime}}]{p}{_x} u\left(\frac{c_{t^{\prime}}}{\bar c_{t^{\prime}}}\right)\d t^{\prime}|\EuScript{F}_t\Bigr].
\eeq
The equality comes from the definition of optimal value. The theorem  is proved.
\end{proof}

 \section{Finite Difference Scheme}
 \label{appendix_FDS}
	In this part we explain what kind of numerical method we use and derive the corresponding formulas. In order to solve PDEs  we will choose the implicit upwind method. The idea of this method is to use  a forward difference for the time derivative and forward or backward diffference for other variables depending on the sign of the coefficient in front of every first order derivative. For this scheme we will implement a generalized upwind method where we add and subtruct the absolute value of the coefficient that changes sign.  Let us set up the grid as follows $V(t,w,\bar c)=V(t_n,w_j,\bar c_k)$ and indices change $j=1\ldots M,$ $n=1\ldots N$ and $k =1 \ldots K.$ We introduce a new time variable as follow: $t_n=T-n\Delta t$ where $T$ is the terminal time.
	
	The approximation scheme for the equation (\ref{PDE_c_opt}) will be  the following
	\beq
	 \hspace{-2cm}\frac{V^{n+1}_{j,k}-V^{n}_{j,k}}{\Delta t}+(\rho +\lambda_{t_{n+1}+x})V^{n+1}_{j,k}+\nonumber \\
	 \frac{\alpha^{n}_{j,k}+|\alpha^{n}_{j,k}|}{2}\frac{V^{n+1}_{j,k}-V^{n+1}_{j-1,k}}{\Delta w}+\frac{\alpha^{n}_{j,k}-|\alpha^{n}_{j,k}|}{2}\frac{V^{n+1}_{j+1,k}-V^{n+1}_{j,k}}{\Delta w}+\nonumber \\ 
	\frac{ \beta^{n}_{j,k}+| \beta^{n}_{j,k}|}{2}\frac{V^{n}_{j,k}-V^{n}_{j,k-1}}{\Delta \bar c}+\frac{ \beta^{n}_{j,k}-| \beta^{n}_{j,k}|}{2}
	 \frac{V^{n}_{j,k+1}-V^{n}_{j,k}}{\Delta \bar c}- \nonumber \\
	  \frac{1}{2}\theta^2\sigma^2w_j^2 \frac{V^{n+1}_{j+1,k}-2V^{n+1}_{j,k}+V^{n+1}_{j-1,k}}{\Delta w^2}- u\left (\frac{\zeta^{n}_{j,k}}{\bar c_k}\right)=0
	\label{pde_scheme}
	\eeq
	where
	\beq 
	\alpha^{n}_{j,k}=-\{(\theta(\mu-r)+r)w_j+\pi-\zeta^{n}_{j,k}\}  \nonumber \\ 
	\beta^{n}_{j,k}=-\eta(c^*-\bar c_k)=-\eta(\zeta^{n}_{j,k}-\bar c_k) \nonumber
	\eeq
	\be
	\zeta^{n}_{j,k}=c^*=\bar c_k^{\frac{\gamma-1}{\gamma}}\left(\frac{V^{n}_{j,k}-V^{n}_{j-1,k}}{\Delta w}-\eta \frac{V^{n}_{j,k+1}-V^{n}_{j,k}}{\Delta \bar c}\right)^{-\frac{1}{\gamma}}.
	\label{beta}
	\ee
	The rest of the parameters are constants and  we will take the following values for them $\pi=1,$ $\rho=0.02,$ $\eta=[10^{-2}\; 10^{-1}\; 10^0],$ $\theta=[0.2\; 0.6 \; 0.9],$ $\sigma=[0.16 \; 0.50 \; 0.75],$ $\mu=0.08,$  $r=0.02$ and $\gamma=3$ .  For every case study and test we specify which values we used.
	
	\begin{itemize}
	\item Boundary condition at $w=w_1=0.$ \\
	The equation (\ref{BC_1}) can be approximated as follows
	\beq
	\frac{V^{n+1}_{1,k}-V^{n}_{1,k}}{\Delta t}+(\rho +\lambda_{t_{n+1}+x})V^{n+1}_{1,k}+\alpha^{n}_{1,k}\frac{V^{n+1}_{2,k}-V^{n+1}_{1,k}}{\Delta w}+\nonumber \\ 
	 \frac{ \beta^{n}_{1,k}+| \beta^{n}_{1,k}|}{2}\frac{V^{n}_{1,k}-V^{n}_{1,k-1}}{\Delta \bar c}+
	\frac{ \beta^{n}_{1,k}-| \beta^{n}_{1,k}|}{2} 
	 \frac{V^{n}_{1,k+1}-V^{n}_{1,k}}{\Delta \bar c}- u\left (\frac{c^*}{\bar c_k}\right)=0  \nonumber
	\eeq
	where $\alpha^{n}_{j,k}=-\pi+\zeta^{n}_{j,k}$  and $\beta^{n}_{1,k}=-\eta(\zeta^{n}_{1,k}-\bar c_k),$ $\zeta^{n}_{1,k}=\min(\pi,\bar c_k).$ 
	\item Boundary condition at $w=w_{\max}=w_M.$ \\
	\be
	\frac{V^{n+1}_{M,k}-V^{n+1}_{M-1,k}}{\Delta w}=0. \nonumber
	\ee
	\end{itemize}
\section{Error analysis}
\label{error}
In this paragraph we are going to provide some intuition about the numerical error, the so-called discretization error. Since, in order to solve our optimization problem,  we use the approximation scheme which is described in detail in Appendix \ref{appendix_FDS}, it is reasonable to check how big the errrors are and how changes in parameters can affect on them. So, we calculate an $L_2-$ norm (see Tables \ref{t1} and \ref{t4}). 
In our case, we have three variables and, as a consequence, the disretization error can be estimated by the following inequality:
\be
e(\Delta t,\Delta w, \Delta \bar c)\le c_1 (\Delta t)^{p_1}+c_2 (\Delta w)^{p_2}+c_3 (\Delta \bar c)^{p_3}
\label{I}
\ee
where $\Delta t,$ $\Delta w, $ $\Delta \bar c$ are step sizes over variables time $t,$ wealth $w$ and habit $\bar c$ respectively, $c_1,\, c_2$ and $c_3$ are finite constants. The time step can be calculated as follows
\be
\Delta t =\frac{b-a}{N-1}
\label{h}
\ee
where time belongs to the interval $t\in[a\ldots b]$ and $N$ is the number of nodes. The other steps $\Delta w$ (number of nodes $M$) and $\Delta \bar c$  (number of nodes $K$) can be computed by analogy. Parameters $p_1,$ $p_2$ and $p_3$  represent order of accuracy (OA). In the general case OA quantifies the rate of convergence of a numerical approximation of a differential equation to the exact solution. In our case we will compare two numerical solutions, with single and doubled step size. 
It can be said that numerical  scheme is accurate of order $(p_1,p_2,p_3),$ which means that a scheme is accurate of order $p_1$ in time, order $p_2$ in wealth and order $p_3$ in habit (see \cite{S}).
Let's estimate the error ratio ($ER$), that is simply ratio of two consequtive norms, for our scheme. First, in the formula (\ref{h}) we omit $1$ in denominator because for $N$ big enough it's not important. Then we choose parameters $p_1=p_2=p_3=p=1$ since we use corresponding approximations for a numerical scheme. Another simplification that we can accept is the following, we omit one term from the inequality (\ref{I}) if we choose a fine enough grid, for example over the time $t$, namely
\be
\Delta t\to 0 {\textrm \quad then \quad} c_1\Delta t \ll  c_2 \Delta w+c_3 \Delta \bar c
\ee
Then, if we assume the step, for example over the habit, is propotional to the wealth step $\Delta\bar c=h\Delta w$ where $h$ is a constant then, by perfoming simple calculations we can derive the ER of our scheme. Create system of 2 equations and divide one by another, this operation will allow to get rid of the constant which will be the combination of another constants, namely $c^{\prime}_2+hc^{\prime}_3$ and the following equation for ER  can be obtained:
\be
\frac{e(M)}{e(2M)}=\left(\frac{\Delta w_{K}}{\Delta w_{2K}}\right)^p=\left(\frac{2M}{M}\right)^p=\left(\frac{4M}{2M}\right)^p=\ldots=2^p.
\ee

Since we do not have the exact solution for this problem we calculated the $L_2$ matrix norm increasing the number of nodes in a particular direction. The error ratio (ER) can be computed based on the following formulas:
\beq
\begin{array}{c}
L_2(n)=norm(c^{*}(t_0,2^{5+n},k)-c^{*}(t_0,2^{6+n},k)), \quad n=1,2,3,4  \vspace{0.3cm} \\
\disp {\textrm ER}=\frac{L_2(n)}{L_2(n+1)}
\end{array}
\label{Error}
\eeq
where $t_0$ is a fixed time moment and $k$ is the number of nodes over EWA of consumption. In other words we fix number of nodes over one variable, for example $\bar c,$ and
increase the number of nodes over another one. In our case it is wealth $w,$ so fix the number of nodes over the wealth for which we got the best results and change the number of nodes over the habit $\bar c.$ Results are provided for three parameters $\eta=10^{-2}, \, 10^{-1}$ and $10^{0}.$
\begin{enumerate}
\item Results for the first case are summarized in the Table (\ref{t1})). The number of nodes over the habit $\bar c$ was chosen to be $K=80.$
Every table shows results for all three values of the smoothing factor $\eta,$ $L_2$ represents values of $L_2$-norm, $ER$ is the error ratio which  was computed using formula (\ref{Error}). Numbers, for example $256/512,$ represent the number of nodes over the wealth $w$ for norm calculation. As follows from the explanation $2^p=2$ and since OA for our scheme  equals $1$ then we should expect that the $ER$ approaches $2.$ As we can see from the tables values  $ER$ for $L_2$ norm vary from $1.76$ for $\eta=1$ (see Table (\ref{t1}), column 7) up to $1.94$ for $\eta=10^{-1}$ (see Table (\ref{t1}), column 5).  
 As we can see from the tables, the results show stability and the value of $ER$ approaches $2.0.$    
\begin{table}
  \centering
  \caption{$L_2$ norm. }
  \begin{tabular}{c|c|c|c|c|c|c}\hline
  \multicolumn{7}{c}{Error analysis for fixed \# of nodes over $\bar c.$ }  \\ \hline  
    \multirow{2}{*}{\#  of nodes} & \multicolumn{2}{c|}{ $\eta=0.01$} &
                                                            \multicolumn{2}{c|}{$\eta=0.1$} &
                                                                 \multicolumn{2}{c}{$\eta=1.0$} \\\cline{2-7}
                       & $L_2$ & $ER$ & $L_2$  & $ER$  & $L_2$  & $ER$  \\ \hline
    64/128    & $1.5$                      &  -    &   $3.04$                    &    -       & $1.3\cdot 10^{1}$  &      -    \\ \hline
    128/256  & $8.1\cdot 10^{-1}$   &  $1.84 $    &    $1.63$                    &   1.86   &$ 7.6$       & $1.76$ \\ \hline
    256 /512 & $4.3\cdot 10^{-1}$   &  $1.89  $   &    $8.58\cdot 10^{-1}$ &   1.91   & $4.2 $      & $1.82$  \\ \hline
    512/1024 & $2.2\cdot 10^{-1}$  &   $1.93$    &    $4.43\cdot 10^{-1}$ &   1.94   & $2.2 $      & $1.86$  \\ \hline
  \end{tabular}
  \label{t1}
 \end{table}
In order to achieve better results we need to do the same procedure in the other direction, over the habit $\bar c$ keeping the fine grid over time.
\item 
So, we will fix the number of nodes over the wealth for which results were the best, i.e. $M=1024$ and increase the number of nodes in another direction. Results are summarized  in Table \ref{t4} for three parameters $\eta=10^{-2}, \, 10^{-1}$ and $10^{0}.$ As in the previous case, we calculate an $L_2$-norm (see Table \ref{t4}).  
\begin{table}
  \centering
  \caption{$L_2$ norm. }
  \begin{tabular}{c|c|c|c|c|c|c}\hline
  \multicolumn{7}{c}{Error analysis for fixed \# of nodes over wealth $w.$ }  \\ \hline  
    \multirow{2}{*}{\#  of nodes} & \multicolumn{2}{c|}{ $\eta=0.01$} &
                                                            \multicolumn{2}{c|}{$\eta=0.1$}  & 
                                                                   \multicolumn{2}{c}{$\eta=1.0$}  \\ \cline{2-7}
                  &    $L_2$                     &  $ER $      &    $L_2$   & $ER$ &    $L_2$   & $ER$    \\ \hline
    64/128    &   $1.58$                     &     -         &   $5.02$  & -        &      -     & -      \\ \hline
    128/256  &    $8.4\cdot 10^{-1}$  & $1.88 $   &   $2.56$  &  1.96   & $4.29$ & -       \\ \hline
    256 /512 &    $4.4\cdot 10^{-1}$  & $1.92$    &   $1.29$  &  1.98   & $2.21$ & 1.95   \\ \hline
  \end{tabular}
  \label{t4}
 \end{table}
As we can see from the table, the value of $ER$ for this case is very similar to the previous one for parameter $\eta=0.01$ despite the fact that in this case we chose a different time grid, more sparse then for other  $\eta$ values, namely for $\eta=0.01$ the number of time nodes was approximately $1\cdot 10^3,$ for other $\eta$ values $\sim 4\cdot 10^4.$ As the number of nodes goes up the time that is needed to do computations increases dramatically therefore it becomes more time consuming to perform calculations for very fine grids. For the last case when $\eta=1.0$ some parameters were changed because of stability issues. So, the number of time nodes were increased $~8\cdot 10^{4},$ the number of wealth nodes was decreased to $M=256$ but the final value of the error is good enough.

Overall, based on the $ER$ results  for all three values of the smoothing factor $\eta$ we can see that the error gradually approaches to $2^1$ as expected and we can conclude that our numerical scheme is stable. 
\end{enumerate}
\section{Numerical results}
\label{numerical_res}
\subsection{Numerical solution with pension.}
In this part we are going to discuss numerical results obtained by solving PDE (\ref{PDE_c_opt}). The parameter set is taken to be the following. Based on papers where authors discuss what is the reasonable value of the risk aversion parameter (e.g. see \cite{Bl}) $\gamma=3,$ risk-free rate $r=0.02,$ volatility $\sigma=0.16,$ and drift $\mu=0.08.$ Moreover, in this paragraph asset allocation is taken to be fixed, $\theta=0.6.$ Then choose pension $\pi=1.$ 
In the next few paragraphs we will discuss different variations of the solution of the existing problem. The goal is to see how optimal consumption changes for the different values of the smoothing parameter $\eta.$
\begin{figure}[H]\centering
\begin{subfigure}[c]{0.4\textwidth}
\includegraphics[scale=0.27]{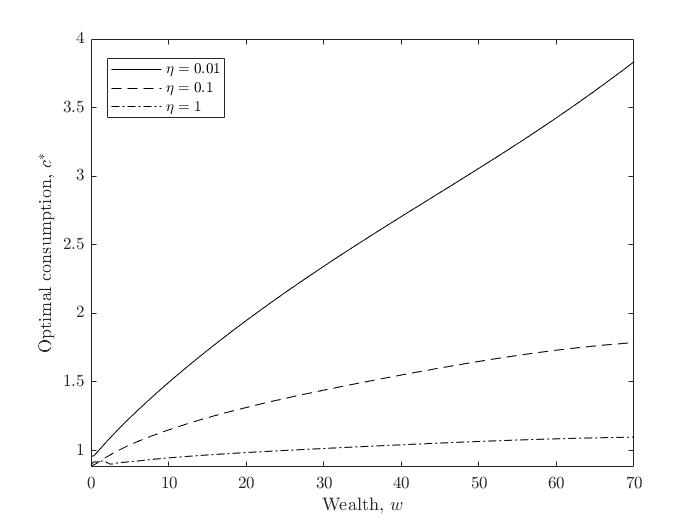}
\caption{$\bar c \approx~ 1.0.$}
\label{f5_1}
\end{subfigure}
\begin{subfigure}[c]{0.4\textwidth}
\includegraphics[scale=0.27]{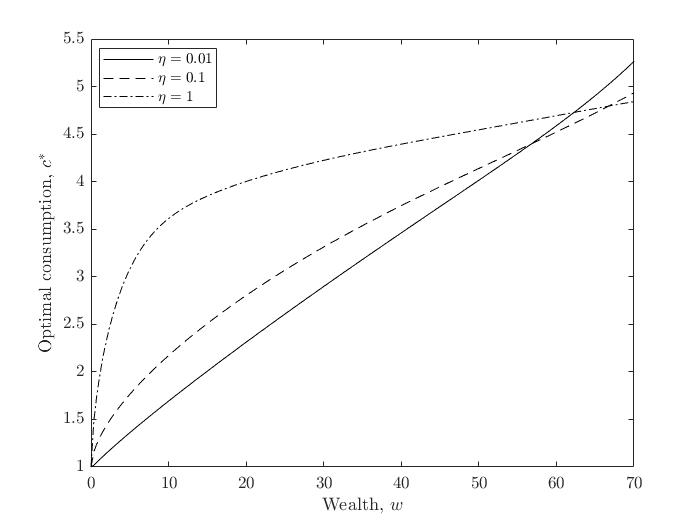}
\caption{$\bar c \approx~ 5.$}
\label{f5_2}
\end{subfigure}
\begin{subfigure}[l]{0.4\textwidth}
\includegraphics[scale=0.27]{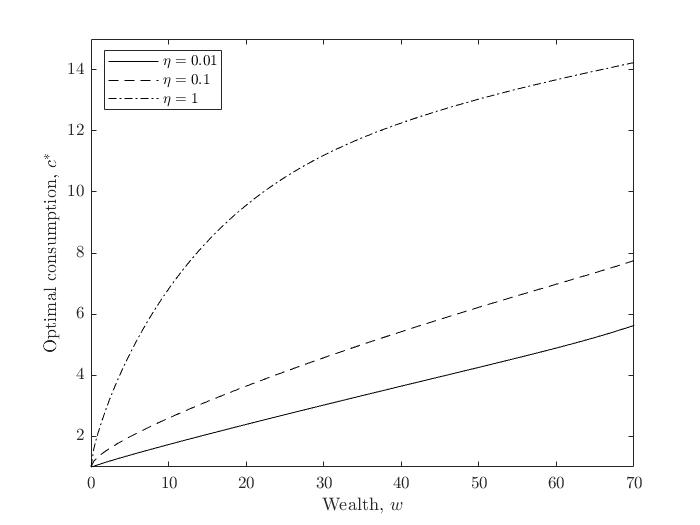}
\caption{$\bar c \approx ~20.$}
\label{f5_3}
\end{subfigure}
\caption{Optimal consumption $c^*$ vs. wealth $ w$ for the fixed value of habit $\bar c$ at age $65$.}
\label{f5}
\end{figure}
%
\paragraph{Optimal consumption $c^*$ vs. wealth $w.$}
First, we choose three different values of the smoothing factor $\eta$ that characterizes how fast the habit changes, for example $\eta=10^{-2},\, 10^{-1}$ and $1.$ 

Figure \ref{f5} plots the relationship between optimal consumption $c^*$ and wealth $w$ for different values of habit $\bar c$, namely $\bar c=1,\, 5,\, 20.$ 
When $\bar c=1$, consumption stays modest with $\eta=1$, and with $\eta=0.1$ or $0.01,$ consumption is even smaller. With  a higher habit $(\bar c=5)$, the $\eta=1$ curve
stays similar, but the $\eta=0.1$  and $\eta=0.01$  curves cross over, to display higher consumption. With the highest habit $(\bar c=20)$, the $\eta=0.1$ and $\eta=0.01$ consumption rises even
faster with $w$. 

Figure \ref{f6} shows how optimal consumption $c^*$ changes with the wealth $w$ for particular values of the parameter $\eta.$
\begin{figure}[H]\centering
\begin{subfigure}[c]{0.4\textwidth}
\includegraphics[scale=0.27]{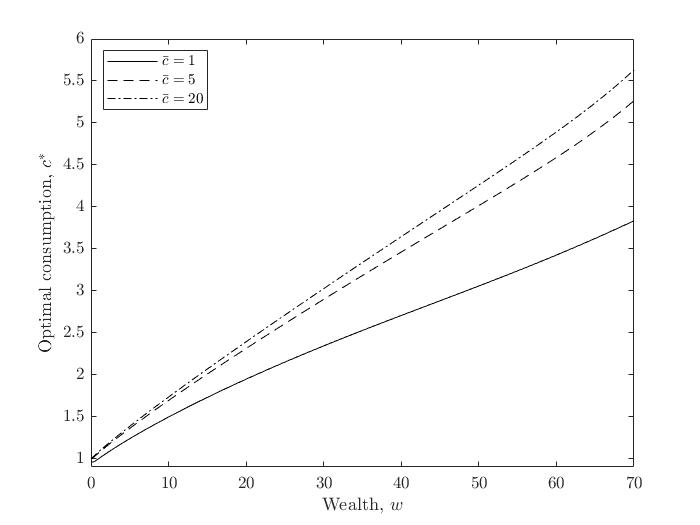}
\caption{$\eta=10^{-2}$}
\label{f6_1}
\end{subfigure}
\begin{subfigure}[c]{0.4\textwidth}
\includegraphics[scale=0.27]{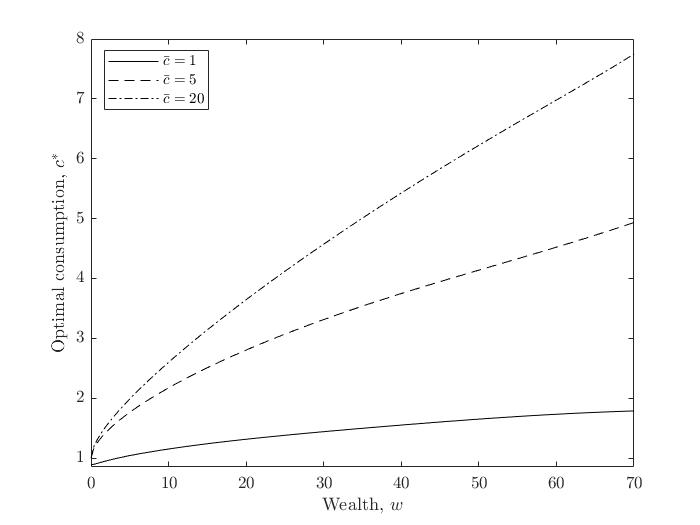}
\caption{$\eta=10^{-1}$}
\label{f6_2}
\end{subfigure}
\begin{subfigure}[c]{0.4\textwidth}
\includegraphics[scale=0.27]{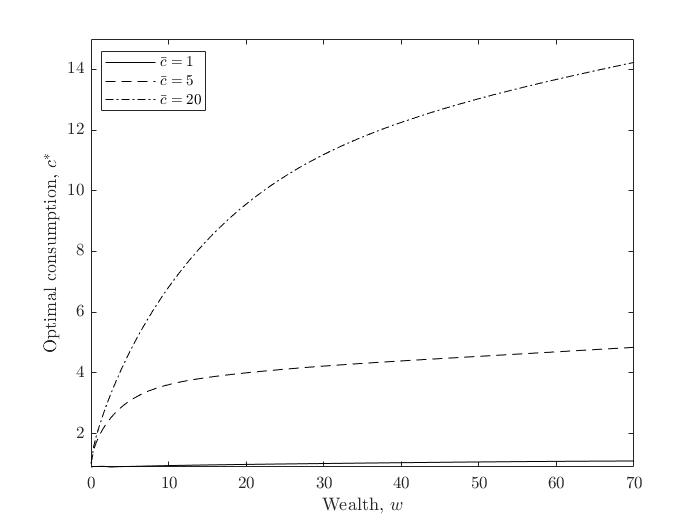}
\caption{$\eta=1$}
\label{f6_3}
\end{subfigure}
\caption{Optimal consumption $c^*$ vs. wealth $ w$  for given $\bar c.$ Every picture represents a different value of $\eta.$ }
\label{f6}
\end{figure}
The variation in consumption with respect to $\bar c$ is the smallest when $\eta$ is small, $\eta=0.01.$ Increasing $\eta$ to $0.1$ widens the spread, with lower consumption with $\bar c=1$ and higher consumption with
$\bar c=20.$ Increasing $\eta$ to $1.0$ widens the spread even further.

Now, let's consider another set of numerical results.
\paragraph{Optimal consumption $c^*$ vs. habit $\bar c.$}
Figure \ref{f7} shows the relationship between the optimal consumption $c^*$ and habit $\bar c\in[0\dots 30]$ for three different values of the smoothing factor $\eta=0.01,\,0.1,\, 1.$  Every picture corresponds to a fixed value of wealth $w=~1,\,30,\,60.$  For low wealth $(w=1)$,  consumption levels off as $\bar c$ rises, although at different levels. For higher wealth $(w=30),$ consumption still levels off quickly when $\eta=1,$ but it keeps rising for longer when $\eta=0.1$ or $0.01.$ With even higher wealth, this happens even more so.
\begin{figure}[H]\centering 
\begin{subfigure}[c]{0.4\textwidth}
\includegraphics[scale=0.27]{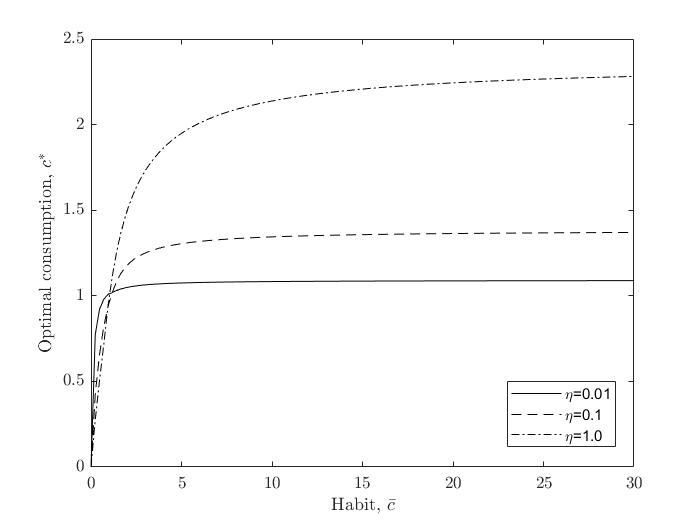}
\caption{$w \approx~ 1.$}
\label{f7_1}
\end{subfigure}
\begin{subfigure}[c]{0.4\textwidth}
\includegraphics[scale=0.27]{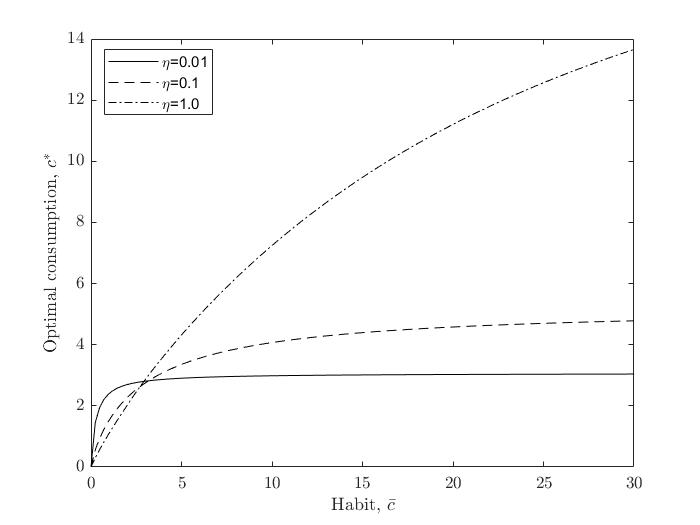}
\caption{$ w \approx ~30.$}
\label{f7_2}
\end{subfigure}
\centering
\begin{subfigure}[c]{0.4\textwidth}
\includegraphics[scale=0.27]{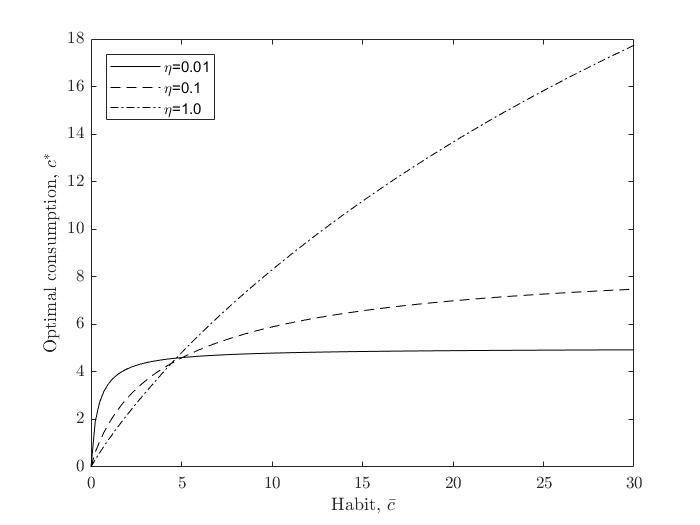}
\caption{$w \approx ~60.$}
\label{f7_4} 
\end{subfigure}
\caption{Optimal consumption $c^*$ vs. habit $\bar c$ where every picture represents a fixed value of wealth.}
\label{f7}
\end{figure}

The last set of pictures (see Figures (\ref{f8_1})-(\ref{f8_3}) shows numerical results similar to the previous set. Each picture corresponds to a fixed value of the smoothing factor $\eta.$ When $\eta=0.01,$ the three curves again level off at modest values of $\bar c.$ For $\eta=0.1$ the $w=1$ curve levels off, while the others continue to rise for longer. And when $\eta=1,$ only the low wealth curve levels off, at least for moderate values of $\bar c.$

\begin{figure}[H]\centering
\begin{subfigure}[c]{0.4\textwidth}
\includegraphics[scale=0.27]{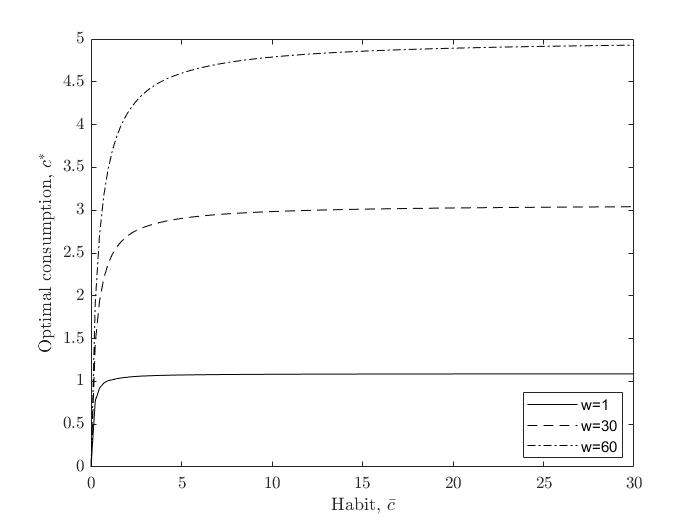}
\caption{$\eta=10^{-2}$}
\label{f8_1}
\end{subfigure}
\begin{subfigure}[c]{0.4\textwidth}
\includegraphics[scale=0.27]{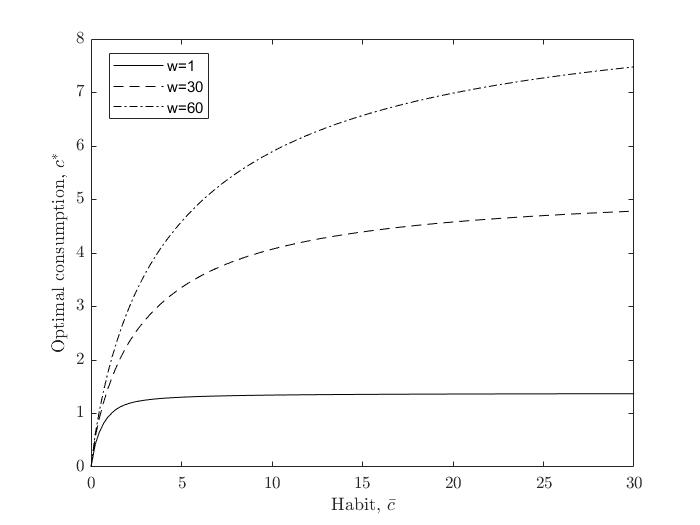}
\caption{$\eta=10^{-1}$}
\label{f8_2}
\end{subfigure}
\begin{subfigure}[c]{0.4\textwidth}
\includegraphics[scale=0.27]{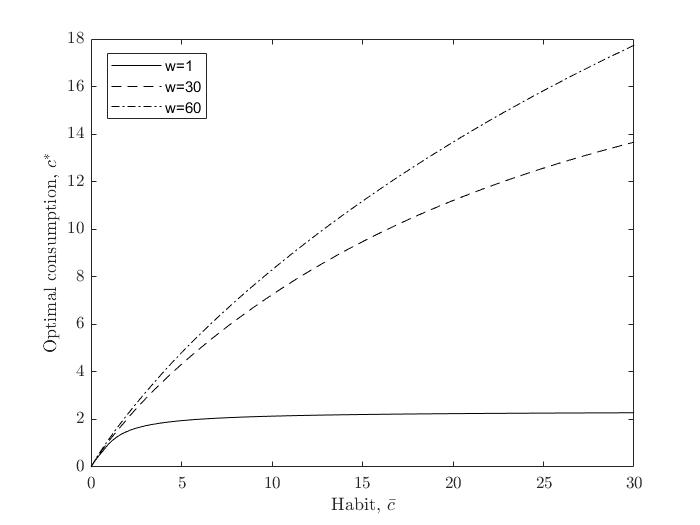}
\caption{$\eta=1$}
\label{f8_3}
\end{subfigure}
\caption{Optimal consumption $c^*$ vs. habit $\bar c$ where every picture represents a fixed value of the smoothing factor $\eta$.}
\label{f8}
\end{figure}

\section{Numerical results for different asset allocation values and volatility.}
\label{section_comparison}
In this paragraph we explore the impact of varying asset allocation $\theta$ or volatility $\sigma.$ In Figure \ref{f46}, each graph has a fixed value of $\theta$ and shows nine curves, corresponding to three choices
of $\bar c$ and three choices of  $\sigma.$ In Figure \ref{f51} the roles of $\theta$ and $\sigma$ are reversed: each graph has a fixed $\sigma,$ but three choices of $\bar c$ and three choices of $\theta.$

With low $\theta$ (Figure \ref{f48_1}), varying $\sigma$ has little impact on consumption. But for higher $\theta$ (\ref{f48_2} and \ref{f46_1}), the sensitivity to $\sigma$ rises. Though it is still quite small in the case $\bar c=1.$
\begin{figure}[H]\centering
\begin{subfigure}[c]{0.4\textwidth}
\includegraphics[scale=0.27]{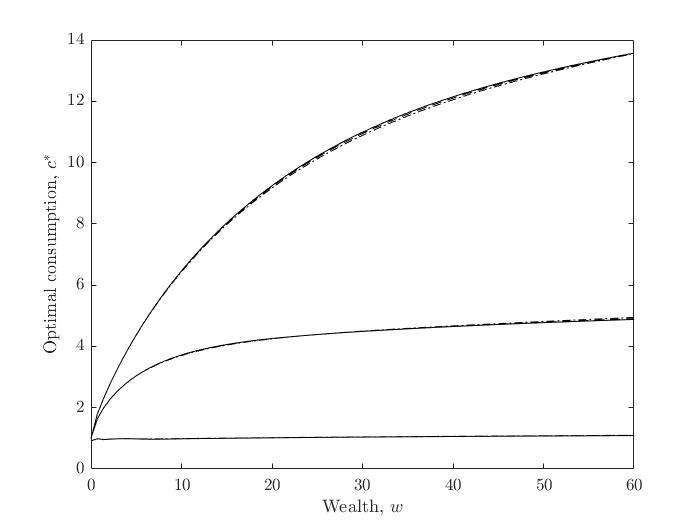}
\subcaption{ $\theta=0.2$ }
\label{f48_1}
\end{subfigure}
\begin{subfigure}[c]{0.4\textwidth}
\includegraphics[scale=0.27]{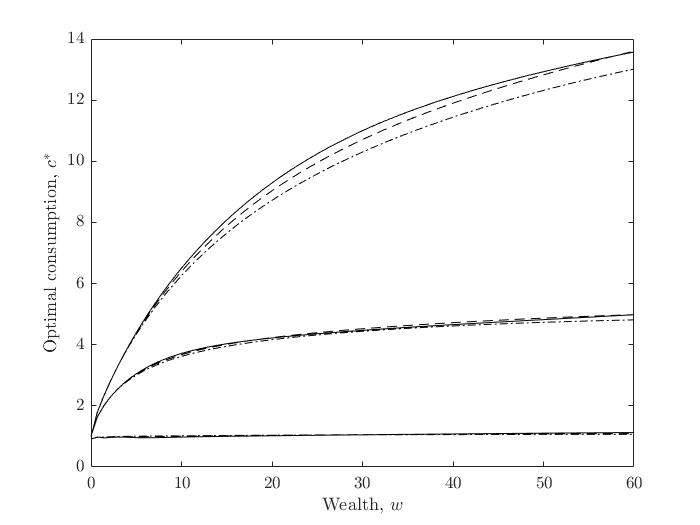}
\subcaption{$\theta=0.6$}
\label{f48_2}
\end{subfigure}
\begin{subfigure}[c]{0.4\textwidth}
\includegraphics[scale=0.27]{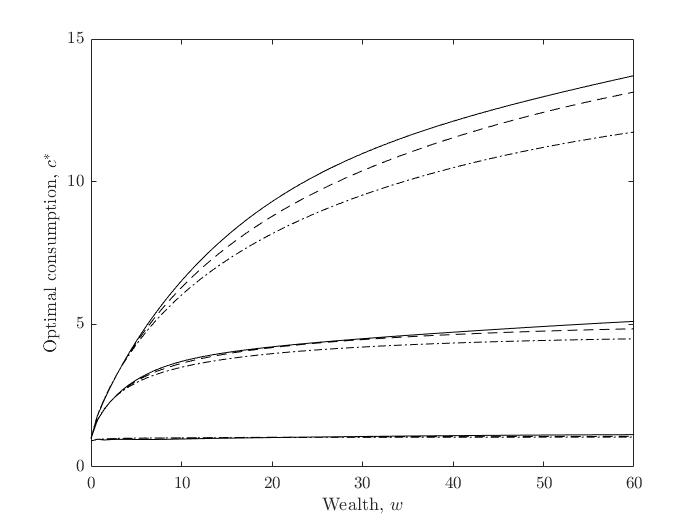}
\subcaption{$\theta=0.9$ }
\label{f46_1}
\end{subfigure}
\caption{Numerical results for OP with pension, for $\eta=1$, , $\bar c=1,\; 5$ and $20$ for fixed value of $\theta=0.9$ for three parameters $\sigma=0.16$ (solid line), $0.50$ (dashed line), $0.75$ (dash-dot line) .}
\label{f46}
\end{figure}
In the same way, with low $\sigma$ (Figure \ref{f49_1}) there is little sensitivity to $\theta,$ but that sensitivity rises when $\sigma$ is large.
\begin{figure}[H]\centering
\begin{subfigure}[c]{0.4\textwidth}
\includegraphics[scale=0.27]{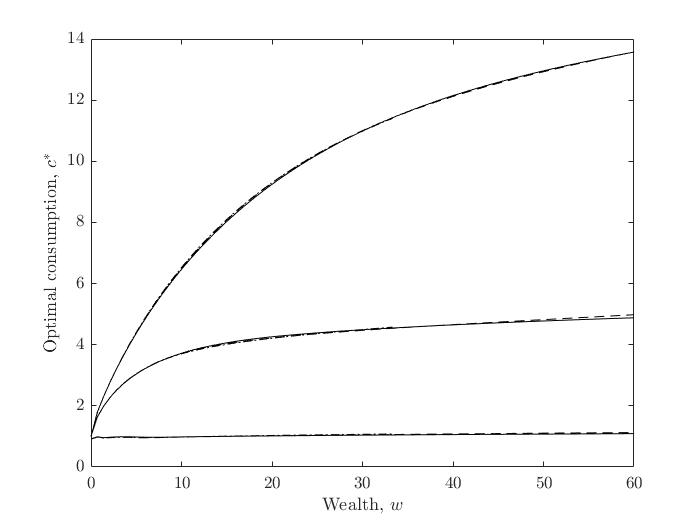}
\subcaption{$\sigma=0.16$ }
\label{f49_1}
\end{subfigure}
\begin{subfigure}[c]{0.4\textwidth}
\includegraphics[scale=0.27]{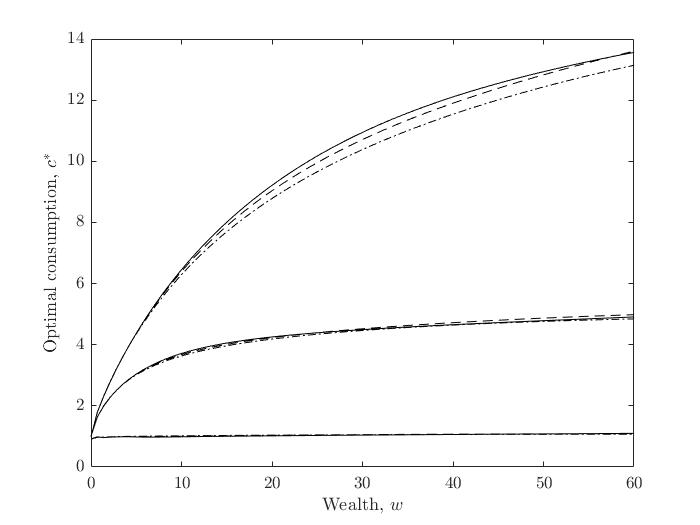}
\subcaption{$\sigma=0.5$ }
\label{f49_2}
\end{subfigure}
\begin{subfigure}[c]{0.4\textwidth}
\includegraphics[scale=0.27]{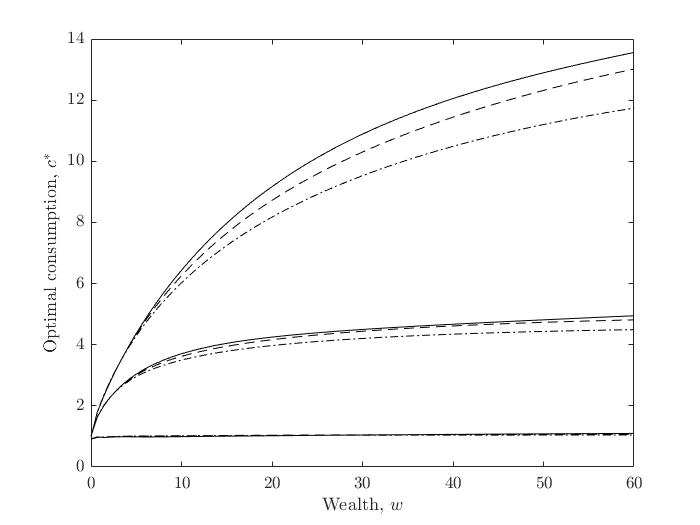}
\subcaption{$\sigma=0.75$ }
\label{f51_1}
\end{subfigure}
\caption{Numerical results for optimization problem with pension, for $\eta=1$,  $\bar c=1,\; 5, \; 20$ and for three parameters $\theta=0.2$ (solid line), $0.6$ (dashed line), $0.9$ (dash-dot line) .}
\label{f51}
\end{figure}
We exhibit this only with $\eta=1.$ For lower choices of $\eta$ we see similar behaviour (not shown), but all the sensitivities increase  when $\eta$ is smaller.
\end{appendix}

\end{document}